\renewcommand{\paragraph}{%
  \@startsection{paragraph}{4}%
  {\z@}{2.5ex \@plus 1ex \@minus .2ex}{-1em}%
  {\normalfont\normalsize\bfseries}%
}
\definecolor{darkgreen}{rgb}{0,0.5,0}
\definecolor{darkblue}{rgb}{0,0,0.8}
\newtheorem{theorem}{Theorem}[section]
\newtheorem{lemma}[theorem]{Lemma}
\newcommand{\calC}{\ensuremath{\mathcal{C}}}
\newcommand{\ignore}[1]{}
\algnewcommand\algorithmicswitch{\textbf{switch}}
\algnewcommand\algorithmiccase{\textbf{case}}
\newcommand{\LOCAL}{\ensuremath{\mathsf{LOCAL}}\xspace}
\newcommand{\set}[1]{\left\{#1\right\}}
\newcommand{\Deltaline}{\bar{\Delta}}
\DeclareMathOperator{\polylog}{polylog}
\DeclareMathOperator{\polyloglog}{polyloglog}
\newcommand{\hide}[1]{}
\begin{document}


\title{\bf Distributed Edge Coloring in Time Quasi-Polylogarithmic in Delta}


\medskip
\author{
	 Alkida Balliu \\
	 University of Freiburg, Germany \\
	 alkida.balliu@cs.uni-freiburg.de
\and
	Fabian Kuhn \\
	University of Freiburg, Germany \\
	kuhn@cs.uni-freiburg.de
\and
	Dennis Olivetti \\
	University of Freiburg, Germany \\
	dennis.olivetti@cs.uni-freiburg.de
}

\date{}
\maketitle

\bigskip
 \begin{abstract}
  The problem of coloring the edges of an $n$-node graph of maximum degree $\Delta$ with $2\Delta - 1$ colors is one of the key symmetry breaking problems in the area of distributed graph algorithms. While there has been a lot of progress towards the understanding of this problem, the dependency of the running time on $\Delta$ has been a long-standing open question. Very recently, Kuhn [SODA '20] showed that the problem can be solved in time $2^{O(\sqrt{\log\Delta})}+O(\log^* n)$.
  
  In this paper, we study the edge coloring problem in the distributed \LOCAL model. We show that the $(\mathit{degree}+1)$-list edge coloring problem, and thus also the $(2\Delta-1)$-edge coloring problem, can be solved deterministically in time $\log^{O(\log\log\Delta)}\Delta + O(\log^* n)$. This is a significant improvement over the result of Kuhn [SODA '20].
 \end{abstract}

\section{Introduction \& Related Work}

An edge coloring of a graph $G=(V,E)$ is an assignment of colors to the edges $E$ of $G$ such that any two edges $e,e'\in E$ that share a common node $v\in V$ are assigned different colors. In the \emph{distributed edge coloring} problem, the graph $G$ models a network and the edges of $G$ have to be colored by using a distributed message passing algorithm on $G$. The typical goal is to color the edges with colors $1,\dots,2\Delta-1$, where $\Delta$ is the maximum degree of $G$. More specifically, the problem is most prominently studied in the so-called \LOCAL model, where the nodes of $G$ communicate in synchronous rounds and in each round, each node can exchange messages of arbitrary size with its neighbors in $G$ and perform some arbitrary internal computation~\cite{linial87,peleg00}. Note that each edge has at most $2\Delta-2$ conflicting (i.e., neighboring) edges and thus a coloring with $2\Delta-1$ colors can be obtained by a simple sequential greedy algorithm. Note also that the $(2\Delta-1)$-edge coloring problem is a special case of the $(\Delta+1)$-vertex coloring problem. Generally, distributed coloring is probably the most extensively studied problem in the area of distributed graph algorithms and certainly also one of the most widely studied problems in distributed computing in general. In the following, we only discuss the related work that is most relevant to the results in the present paper. For a relatively recent survey on distributed coloring in general, we refer to \cite{barenboimelkin_book}.

\paragraph{Distributed coloring as a function of the number of nodes.} It has been known since the 1980s that a $(2\Delta-1)$-edge coloring and also more generally a $(\Delta+1)$-vertex coloring can be computed by a $O(\log n)$-round randomized algorithm in the \LOCAL model~\cite{alon86,luby86,linial87}. Whether the problems can be solve similarly efficiently by a deterministic distributed algorithm has been a highly important open problem in the area for 30 years~\cite{linial87,barenboimelkin_book,derandomization,stoc17_complexity}. It was recently shown that the $(2\Delta-1)$-edge coloring problem can also be solved deterministically in $\polylog n$ rounds~\cite{FischerGK17,Harris18}. In a very recent breakthrough, Rozho\v{n} and Ghaffari~\cite{polylogdecomp} showed that in the \LOCAL model, every (locally checkable) problem with a $\polylog n$-round randomized solution can also be solved in $\polylog n$ time deterministically, and hence also the $(\Delta+1)$-vertex coloring problem can be solved deterministically in $\polylog n$ rounds. In combination with some recent advanced randomized distributed coloring algorithms, these results imply that both the $(2\Delta-1)$-edge coloring problem and also the more general $(\Delta+1)$-vertex coloring problem can be solved in time $\polyloglog n$~\cite{ElkinPS15,chang18_coloring}.

\paragraph{Distributed coloring as a function of the maximum degree.} The general objective when solving a distributed problem in the \LOCAL model is to understand to what extent it is sufficient for a node $v$ to only gather local information in order to compute $v$'s part of the solution. This question is most interesting in very large networks, where the maximum degree $\Delta$ might be much smaller than $n$ or even  independent of $n$. It is therefore natural to ask about the complexity of a problem as a function of $\Delta$, rather than as a function of $n$. This might also allow to determine bounds on the locality that only depend on local properties and are in particular (almost) independent of the size of the network. By the classic lower bound of Linial~\cite{linial87}, it is known that $\Omega(\log^* n)$ rounds are needed even to $3$-color the nodes (or edges) of an $n$-node cycle. The typical objective therefore is to find the best possible complexity of the form $O(f(\Delta)+\log^* n)$. Linial~\cite{linial87} showed that in $O(\log^* n)$ rounds, it is possible to (deterministically) compute an $O(\Delta^2)$-coloring. In the case of the distributed coloring problem, the question of finding the best complexity of the form $O(f(\Delta)+\log^* n)$ is thus also closely related to the question of reducing the number of colors of a given initial coloring in time that only depends on the number of initial colors and is independent of $n$.

As it is straightforward to reduce the number of colors of a given coloring by $1$ in a single round, the $O(\log^* n)$-round $O(\Delta^2)$-coloring algorithm of \cite{linial87} immediately implies an $O(\Delta^2 + \log^* n)$-round algorithm for computing a $(\Delta+1)$-coloring (or a $(2\Delta-1)$-edge coloring). With a slightly more clever color reduction scheme, the time complexity can be improved to $O(\Delta\log\Delta + \log^* n)$~\cite{szegedy93, Kuhn2006On}. By introducing the concept of distributed defective colorings and a divide-and-conquer approach to the distributed coloring problem, Barenboim and Elkin~\cite{barenboim09}, as well as Kuhn~\cite{spaa09} showed that the time for computing a $(\Delta+1)$-vertex coloring can be improved to $O(\Delta+\log^* n)$. For the $(2\Delta-1)$-edge coloring problem, the same time complexity was already known from an earlier paper by Panconesi and Rizzi~\cite{PanconesiR01}. In \cite{barenboim15}, Barenboim managed to develop the first algorithm with a time complexity that is sublinear in $\Delta$ by giving an $O(\Delta^{3/4}\log\Delta + \log^* n)$-round $(\Delta+1)$-vertex coloring algorithm. The most important novel idea of \cite{barenboim15} was to consider the more general list coloring problem. In a $(\Delta+1)$-list vertex coloring, each node is initially given a list of $\Delta+1$ arbitrary colors and in the end, each node must be colored with one of the colors from its list. Being able to solve list coloring in particular allows to extend an initial partial coloring of a graph to a full coloring of the graph. The ideas of \cite{barenboim15} were further developed by Fraigniaud, Heinrich, and Kosowski~\cite{fraigniaud16}, who obtain a $(\Delta+1)$-vertex coloring algorithm that runs in $O(\sqrt{\Delta}\cdot\polylog\Delta + \log^* n)$ rounds in the \LOCAL model. A small improvement of \cite{barenboim18} yields an algorithm with time complexity $O(\sqrt{\Delta\log\Delta}\cdot\log^*\Delta + \log^* n)$. This is the fastest known algorithm for the $(\Delta+1)$-vertex coloring problem and thus also the fastest known algorithm that works for both the edge and the vertex coloring problem. For the edge coloring problem, very recently Kuhn~\cite{soda20_coloring} showed that the algorithm of \cite{fraigniaud16,barenboim18} can be improved significantly and that it is possible to compute a $(2\Delta-1)$-edge coloring in time $2^{O(\sqrt{\log\Delta})}+O(\log^* n)$. The main result of this paper is another substantial improvement for the edge coloring problem and we show that a $(2\Delta-1)$-edge coloring can be computed in time $\log^{O(\log\log\Delta)}\Delta + O(\log^* n)$ rounds and thus in time quasi-polylogarithmic in $\Delta$. We discuss our contribution in more detail below.

We note that the distributed edge coloring problem has been known to have an easier structure than the vertex coloring problem. The probelm has been widely studied, both in the deterministic and the randomized setting, and there has been plenty of progress in the past years (in addition to the papers already mentioned, see, e.g., \cite{panconesi1997randomized, dubhashi1998near, CzygrinowHK01, ghaffari17, derandomization,ChangHLPU18, stoc18_edgecoloring, SuVu19}). While the most classic version of the problem asks to color the edge of a graph with $2\Delta-1$ colors, in the literature, there are algorithms that use color palettes of various sizes. Notice that while $(2\Delta-1)$-coloring can be solved in time $O(f(\Delta)+\log^*n)$, for the $(2\Delta - 2)$-edge coloring problem, a lower bound of $\Omega(\log n)$ is known even for bounded-degree graphs~\cite{LLL_lower}.

\paragraph{Our contribution.}
We make progress in our understanding of the complexity of the edge coloring problem in the \LOCAL model of distributed computation. More specifically, we give a novel algorithm to the $(\deg(e)+1)$-list edge coloring problem, which is defined as follows. For an edge $e=\set{u,v}\in E$ of a graph $G=(V,E)$, we define $\deg(e):=\deg(u)+\deg(v)-2$ to be the degree of $e$ (i.e., $\deg(e)$ is equal to the number of neighboring edges of $e$). In an instance of the $(\deg(e)+1)$-list edge coloring problem on $G=(V,E)$, each edge $e\in E$ of $G$ is given a list $L_e$ of $\deg(e)+1$ colors and the objective is to compute a valid edge coloring, where each edge $e$ is colored with a color from its list $L_e$. Throughout the paper, we assume that all the lists $L_e$ consist of colors from $\set{1,\dots,\Delta^c}$ for some constant $c>0$. For this setting, we
prove the following main theorem.

\begin{framed}
  \begin{theorem}\label{thm:main}
    There is a deterministic distributed algorithm that solves $(\deg(e) +1)$-list edge coloring problem in $\log^{O(\log \log \Delta)} \Delta + O(\log^*n)$ rounds in the \LOCAL model.
  \end{theorem}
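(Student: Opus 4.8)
The plan is to follow the now-standard divide-and-conquer strategy for degree-dependent list coloring, but to push the recursion depth down to $O(\log\log\Delta)$ by using a much more aggressive splitting step. The broad shape is the one introduced by Barenboim~\cite{barenboim15} and refined by Fraigniaud--Heinrich--Kosowski~\cite{fraigniaud16} and Kuhn~\cite{soda20_coloring}: partition the edge set into a few parts of substantially smaller maximum degree, assign each part a disjoint chunk of the color space (so that the parts can be colored independently and in parallel), and recurse on each part. The running time then satisfies a recursion of the form $T(\Delta) \le T(\Delta') + (\text{cost of one splitting step})$, where $\Delta'$ is the new maximum degree, and we want the number of levels until $\Delta$ becomes a constant (equivalently, until the residual problem can be finished by a known subroutine) to be $O(\log\log\Delta)$. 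Achieving $O(\log\log\Delta)$ levels forces the degree to drop from $\Delta$ to roughly $\poly\log\Delta$ in a \emph{single} split (and then from $\poly\log\Delta$ to $\poly\log\log\Delta$, etc.); the per-level cost must itself be only $\poly\log\Delta + O(\log^* n)$ so that the total is $\log^{O(\log\log\Delta)}\Delta + O(\log^* n)$.

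First I would set up the splitting primitive. Since we are coloring \emph{edges}, the key structural advantage (already exploited in \cite{soda20_coloring}) is that the line graph is very special: it is a union of at most $\Delta$ cliques, one per vertex, and neighboring edges at a vertex $v$ compete for colors only through $v$. This lets us reduce one split step to a near-balanced \emph{defective}/\emph{arbdefective} edge coloring or, better, to an orientation-and-partition argument: orient the graph so that each vertex has small out-degree, split the out-edges at each vertex into $k$ groups of size about $\Deltaout/k$ using a Linial-type $O(\log^* n)$-round coloring of the conflict structure, and let group $i$ inherit the $i$-th block of the color lists (restricted appropriately so that $\deg(e)+1$ lists in the subinstance still have enough colors). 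After this, each part has maximum degree $O(\Delta/k + k)$, and choosing $k \approx \sqrt\Delta$ would give the square-root recursion of prior work. To get to $\poly\log\Delta$ in one shot instead, I would iterate this split $\Theta(\log\Delta / \log\log\Delta)$ times \emph{within one "super-level"} with $k \approx \poly\log\Delta$, so that each super-level costs $\Theta(\log\Delta/\log\log\Delta) \cdot (\poly\log\Delta + O(\log^* n)) = \poly\log\Delta + O(\log^* n \cdot \log\Delta/\log\log\Delta)$ — and then recursing on the $\poly\log\Delta$-degree residual graph gives the desired $O(\log\log\Delta)$ super-levels; the $\log^* n$ terms must be handled carefully so they telescope to $O(\log^* n)$ rather than multiplying, which I expect to do via the standard trick of computing one good coloring of the line graph up front (in $O(\log^* n)$ rounds, using $\poly\Delta$ colors by Linial) and reusing it at every level so that no later level pays another $\log^* n$.

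The base case, once the maximum degree is below some fixed threshold (a constant, or $\polyloglog\Delta$, or whatever the recursion bottoms out at), would be finished by an existing algorithm — e.g. the $O(\Delta+\log^* n)$-round list edge coloring of Panconesi--Rizzi~\cite{PanconesiR01} or a $\poly(\Delta)+O(\log^* n)$ routine — whose cost is absorbed into the final bound. I would also need the now-routine bookkeeping to verify that the list-coloring \emph{invariant} is preserved by each split: if an edge $e$ has degree $d$ in the current subgraph and a list of $d+1$ colors, then after splitting, the sub-edge still sees at least (its new degree) $+1$ colors. This is where partitioning colors by "blocks" must be done in lockstep with partitioning edges by out-degree buckets and in-degree buckets, using a slightly defective split so the arithmetic closes; the calculations here are delicate but not deep.

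The hard part — and where the new idea must live — is making a \emph{single} split reduce the degree from $\Delta$ all the way to roughly $\poly\log\Delta$ at a cost of only $\poly\log\Delta + O(\log^* n)$ rounds, rather than the $\sqrt\Delta \to \sqrt\Delta$ reduction of \cite{fraigniaud16, barenboim18} or the $2^{O(\sqrt{\log\Delta})}$-type recursion of \cite{soda20_coloring}. A naive iteration of a $(\Delta \to \Delta/k + k)$-split $\Theta(\log_k\Delta)$ times costs $\Theta(\log_k\Delta) \cdot (\text{cost per iteration})$, and with $k = \poly\log\Delta$ the number of iterations is $\Theta(\log\Delta/\log\log\Delta)$ and the cost per iteration had better be $\poly\log\Delta$ \emph{independent of the current degree} — which is exactly the subtlety: an arbdefective or defective coloring step usually costs something growing with the degree or with the defect parameter. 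I expect the resolution to be an amortization/pipelining argument in which the expensive coloring computations at all iterations of a super-level are done with respect to a single pre-computed $O(\Delta^c)$-coloring of the line graph, so that each of the $\Theta(\log\Delta/\log\log\Delta)$ iterations only costs $\poly\log\Delta$ local-computation rounds and \emph{zero} further $\log^* n$ overhead. Verifying that these pipelined splits compose correctly while respecting the $(\deg(e)+1)$-list constraint at every intermediate stage is the crux, and is where I would spend the bulk of the argument.
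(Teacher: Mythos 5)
Your high-level template — split the edge set, assign disjoint chunks of the color palette to the parts so they can be colored independently, recurse — is the right genus, and you correctly identify that the novelty must be in making a single degree-reducing step cheap. But there are two genuine gaps, one structural and one arithmetic, and together they mean the argument does not close.

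\textbf{The structural gap.} The crux of the paper is not a degree split but a \emph{list color space reduction} (Lemma~\ref{lemma:colorspacereduction}), and you hand-wave exactly the point that makes it work. When the lists $L_e$ are arbitrary, there is no reason a fixed partition $\calC_1,\dotsc,\calC_q$ of the palette intersects every $L_e$ nicely: some edge could have its entire list concentrated in one chunk. The paper's Lemma~\ref{lemma:largeintersection} shows that for \emph{every} edge there is a level $\ell$ with at least $2^\ell$ chunks each containing a $\Theta(1/(2^\ell H_q))$ fraction of $L_e$, and Lemma~\ref{lemma:colorspacereduction} then assigns chunks to edges so that the new degree-to-list ratio degrades by only a $O(H_{2p}\log p)$ factor. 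This assignment step is itself a recursive $(\deg(e)+1)$-list edge coloring call on a virtual graph of degree $O(p)$. Crucially, to even invoke this step one needs the lists to be \emph{longer} than $\deg(e)+1$ by a slack factor $S$, which is why the paper introduces the slack parameter and Lemma~\ref{lemma:hardtoeasy} (defective coloring into $O(\beta^2)$ subgraphs of $1/(2\beta)$ the degree, processed sequentially while stripping colors from lists). You do not introduce any analogue of the slack parameter or the large-intersection lemma; ``restricted appropriately so that $\deg(e)+1$ lists in the subinstance still have enough colors'' is precisely the step you cannot do directly, and it is where the paper spends Sections~\ref{subsec:hardtoeasy}--\ref{subsec:colorspacereduction}.

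\textbf{The arithmetic gap.} Your complexity accounting is internally inconsistent and also too optimistic. You say a super-level should drop the degree from $\Delta$ to $\polylog\Delta$ at additive cost $\polylog\Delta + O(\log^* n)$, and then assert there are $O(\log\log\Delta)$ super-levels. But $\Delta\to\polylog\Delta$ per super-level gives $O(\log^*\Delta)$ super-levels, and an additive per-level cost of $\polylog\Delta$ would total $\polylog\Delta \cdot O(\log^*\Delta) = \polylog\Delta$, which is strictly better than the theorem's $\log^{O(\log\log\Delta)}\Delta$ and would resolve the open problem in Section~5 — far too good to be true on this argument. The paper's bound is quasi-polylogarithmic for a specific reason: the recursion is \emph{multiplicative}, not additive. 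Lemma~\ref{lemma:hardtoeasy} gives $T(\Deltaline,1,C)\leq O(\beta^2\log\Deltaline)\cdot T(\Deltaline,\beta,C)+\cdots$, i.e., one unit-slack instance decomposes into $O(\beta^2\log\Deltaline)$ \emph{sequential} slack-$\beta$ instances, and combining with the color-space reduction yields
$T(\Deltaline,1,\Deltaline^c)\leq O(\log^{8c+2}\Deltaline)\cdot T(2\sqrt{\Deltaline}-1,1,2\sqrt{\Deltaline})+O(\log^{8c+2}\Deltaline)$. The degree falls only from $\Deltaline$ to $\sqrt{\Deltaline}$ per level (hence $O(\log\log\Delta)$ levels), and the $\polylog\Delta$ factor accumulates multiplicatively across those levels, giving $\log^{O(\log\log\Delta)}\Delta$. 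Your proposed ``amortization/pipelining'' to make iterations cost $\polylog\Delta$ independent of the current degree is exactly the missing mechanism, and the paper supplies it by recursing not on a degree-split of the same graph but on the much smaller virtual instance $P(2p-1,1,2p)$ created by the color-subspace assignment.
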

\end{framed}

The $(\deg(e)+1)$-list edge coloring problem is clearly a generalization of the $(2\Delta-1)$-edge coloring problem. As a corollary, we therefore immediately get that also the  $(2\Delta-1)$-edge coloring problem can be solved in quasi-polylogarithmic in $\Delta$ deterministic rounds in the \LOCAL model.

\section{Preliminaries}
\subsection{Definitions}
Let $G=(V,E)$ be an undirected graph. We denote as $\Delta$ the maximum degree of the nodes of $G$, and as $\Deltaline$ the maximum degree in the line graph of $G$. Clearly, $\Deltaline \le 2\Delta-2$. We use $\deg(v)$ to denote the degree of a node $v \in V$, and we use $\deg(e)$ to denote the degree of $e$ in the line graph of $G$. 

We will further assume that for an integer $p\geq 1$, $H_p:=\sum_{i=1}^p 1/i$ denotes the
$p^{\mathit{th}}$ harmonic number and we assume that $\log x$ denotes
the logarithm to the base $2$.

\paragraph{List edge coloring}
Given a graph $G=(V,E)$ where a list $L_e$  known to both endpoints of $e$ is assigned to every edge $e\in E$, the (distributed) list edge coloring problem requires the nodes to color each incident edge with an element in $L_e$, satisfying the constraint that if two edges are incident to the same node, their assigned color must be different. The $(\deg(e)+1)$-list edge coloring problem is defined to be the special case that satisfies $|L_e| > \deg(e)$ for all $e \in E$.

\paragraph{Edge coloring}
The edge coloring problem is a special case of list edge coloring, where all the lists assigned to the edges are the same. The $(2\Delta-1)$-edge coloring problem is the case where all lists have size $2\Delta-1$, that is the amount required to make the problem greedily solvable in the centralized setting. Notice that a solution for the  $(\deg(e)+1)$-list edge coloring problem implies a solution for the $(2\Delta-1)$-edge coloring problem.

\paragraph{Defective coloring}
The $d$-defective $c$-coloring problem requires to color nodes of a graph $G=(V,E)$ with $c$ colors, such that the subgraphs induced by nodes of the same color have degree at most $d$. In the defective edge coloring case, the edges must be colored with $c$ colors, and $d$ is an upper bound on the degree of the line graph induced by edges of the same color. We sometimes require that the defect of each edge $e\in E$ depends on the degree $\deg(e)$ of $e$. The definition of a defective edge coloring is then generalized in the natural way: a $f(e)$-defective edge coloring is a coloring of the edges, where each edge $e\in E$ has at most $f(e)$ neighboring edges of the same color.

\subsection{LOCAL Model}
We consider the standard \LOCAL model of distributed computing. In this model, we have a communication network that can be represented as a graph $G=(V,E)$. Nodes represent entities that can perform computation, while edges represent communication links. This model is synchronous. All nodes start the computation at the same time, and then the computation proceeds in rounds. In each round, nodes can exchange a message with each neighbor. At the beginning nodes know the size $n = |V|$ of the graph, the maximum degree $\Delta$ of the graph, and their ID, that is a value in $\{1,\dotsc,n^{O(1)}\}$ different from the ones assigned to all the other nodes of the graph. At the end nodes must output their part of the solution. For example, for the edge coloring problem each node must output the color of each of its incident edges. If two nodes are incident on the same edge, they must output a consistent value for that edge. We say that an algorithm runs in $T$ rounds if all nodes output their part of the solution within $T$ rounds of communication. We consider deterministic algorithms: nodes do not have access to random bits. 

\section{Key Ideas}
In this paper, we show that the $(\deg(e)+1)$-list edge coloring can be solved in $\log^{O(\log \log \Delta)} \Delta + O(\log^*n))$ deterministic rounds in the \LOCAL model. Since the $(2\Delta - 1)$-edge coloring problem is a special case of the list edge coloring problem, we directly get as a corollary that the $(2\Delta - 1)$-edge coloring problem can be solved in deterministic quasi-polylogarithmic in $\Delta$ rounds in the \LOCAL model.
In order to show our claim, we use a technique already presented in \cite{soda20_coloring} (and which is based on ideas from \cite{barenboim15,fraigniaud16}). We show a procedure that solves more relaxed \emph{``easy''} instances of the list coloring problem, and then we use this procedure to solve the ``hard'' case in a black box manner.

\paragraph{Relaxed list edge coloring.}
Let $P(\Deltaline,1,C)$ be the family of $(\deg(e)+1)$-list edge coloring problems on graphs with maximum edge-degree $\Deltaline$, where the color palette has size $C$. In the relaxed version of the list coloring problem we require that, for each edge $e$, the size of $L_e$ is large. More precisely, we define as $P(\Deltaline,S,C)$ the family of list edge coloring problems on graphs with maximum edge-degree $\Deltaline$, where the color palette has size $C$, and the lists have slack $S$, that is, the list of each edge $e$ has size strictly greater than $S\cdot \deg(e)$. Also, let $T(\Deltaline,S, C)$ be the time required to solve $P(\Deltaline,S,C)$.  We show that we can reduce a single list edge coloring instance to many relaxed list edge coloring instances.
\paragraph{From slack \boldmath $S$ to slack $1$.}
Suppose we are given an initial $X$-edge coloring. We show that, if we have an algorithm $A(\Deltaline,S, C)$ that solves $P(\Deltaline,S,C)$ in time $T(\Deltaline,S, C)$, then we can solve $P(\Deltaline,1,C)$ in time
\[
T(\Deltaline, 1, C)\leq O(S^2\cdot\log \Deltaline)\cdot T(\Deltaline, S, C) + O(\log \Deltaline \log^*X).
\]
In other words, we show that in order to solve a single instance with no slack, we can sequentially solve roughly $S^2$ instances with higher slack $S$. The idea is that, while we can not make the lists larger, we can try to make the degree smaller. We start by computing a defective edge coloring, that basically decomposes our graph in many subgraphs of smaller degree. Then we process our graphs sequentially, and we color each graph with the algorithm that requires higher slack. The idea is that each time we apply this algorithm, we put in the list of the edges only the colors that are still unused, and if a list gets too small, then the edge does not participate in the process. We finally recurse on the uncolored edges, and we show that by choosing parameters wisely, this does not take too much time.

\paragraph{How to solve \boldmath $P(\Deltaline,S,C)$.}
In \cite{soda20_coloring}, Kuhn presents a technique called list color space reduction. In general, this technique splits the color space into many independent subspaces, and tries to assign a subspace to each edge. By doing so, it is possible to independently recurse on each graph induced by edges with the same assigned subspace. The main contribution of our work is a faster list color space reduction algorithm for the case of edge coloring.
Suppose we are given a list edge coloring instance with a color palette of size $C$, where each edge $e$ has its list $L_e$. We split the color palette roughly into $p$ parts, $C_1,\dotsc, C_p$, each of size at most $C/p$, and assign an index $i\in\{1,\dotsc,p\}$ to edges. Then, each edge $e$ having index $i$ updates its list as $L_e=L_e \cap C_i$. In this way, we divide the list edge coloring instance into $p$ independent list edge coloring instances, each with a color palette of size $C/p$. Then, each of the $p$ problem instances can be solved in parallel, by using the same algorithm in the subgraph induced by edges that have the same index. We need to be careful: for each edge $e$, the size of the list $L_e$ must not reduce too much compared to the degree that $e$ has in the subgraph induced by edges that have the same index as $e$. We show that, by carefully creating lists of indexes, and by using a fast defective edge coloring algorithm, it is possible to do this operation fast.

\paragraph{Putting things together}
By reducing a list coloring problem to many relaxed list edge coloring problems, we basically get $T(\cdot, 1, \cdot)$ as a function of $T(\cdot, S, \cdot)$. Then, by solving $P(\cdot, S, \cdot)$, we will essentially show how to express $T(\cdot, S, \cdot)$ as a function of $T(\cdot, S', \cdot)$ and $T(\cdot, 1, \cdot)$, where $S'\le S$. Having all this, we can find a closed formula for $T(\cdot, 1, \cdot)$ and obtain our quasi-polylogarithmic in $\Delta$ time complexity.

\section{Edge Coloring}
In this paper, we will not only prove that the $(2\Delta-1)$-edge coloring problem can be solved in quasi-polylogarithmic in $\Delta$ time, but we will prove a stronger statement. In fact, we will show that, if each edge $e$ is provided with a list of $\deg(e)+1$ colors, it is possible to assign to each edge a color from its list, such that neighboring edges have different colors, and this can be done in time quasi-polylogarithmic in $\Delta$. More formally, we will prove the following theorem.
\begin{theorem}\label{thm:listedgecoloring}
	The $(\deg(e) +1)$-list edge coloring problem can be solved in $\log^{O(\log \log \Delta)} \Delta + O(\log^*n)$ deterministic rounds in the \LOCAL model.
\end{theorem}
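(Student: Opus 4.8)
The plan is to follow the two-level reduction framework outlined above. Recall the notation $P(\Deltaline,S,C)$ for the family of $(\deg(e)+1)$-list edge coloring instances with maximum edge degree $\Deltaline$, palette size $C$, and slack $S$ (that is, $|L_e|>S\cdot\deg(e)$ for every edge), and $T(\Deltaline,S,C)$ for its deterministic round complexity. The dependence on $n$ is dealt with once and for all: applying Linial's algorithm to the line graph produces, in $O(\log^* n)$ rounds, a proper edge coloring with $X=O(\Deltaline^2)=\poly\Delta$ colors, and every later subroutine only uses such a proper coloring for local symmetry breaking, so it runs in a number of rounds depending on $\Delta$ alone (in particular every $\log^*X$ becomes $O(\log^*\Delta)$, which is absorbed into $\polylog\Delta$). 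Hence it suffices to prove $T(\Deltaline,1,C)\le\log^{O(\log\log\Delta)}\Delta$ for $C=\poly\Delta$ given a proper $\poly\Delta$-edge coloring. The common workhorse is a fast \emph{degree-dependent defective edge coloring}: from a proper $X$-edge coloring and a parameter $r$, one computes in $O(\log\Deltaline\cdot\log^*X)$ rounds an $f(e)$-defective edge coloring with $f(e)\le\lceil\deg(e)/r\rceil$ using $O(r^2)$ colors per degree scale $2^i\le\deg(e)<2^{i+1}$, of which there are $O(\log\Deltaline)$; this step exploits that the line graph is a disjoint union of cliques so that orientation/splitting can be iterated cheaply, and it is the place where edge coloring is genuinely easier than vertex coloring.

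The \textbf{slack reduction} step establishes
\[
T(\Deltaline,1,C)\ \le\ O\!\left(S^2\log\Deltaline\right)\cdot T(\Deltaline,S,C)\ +\ O(\log\Deltaline\log^*X).
\]
Given a slack-$1$ instance, compute an $\lceil\deg(e)/(2S)\rceil$-defective edge coloring with $q=O(S^2)$ colors per scale, and process its classes $1,\dots,q$ one after another. When processing class $j$, delete from each list $L_e$ the colors already fixed on finalized neighbors of $e$, then run the assumed slack-$S$ algorithm on $G_j$; any edge whose residual list is no longer strictly above $S\cdot\deg_{G_j}(e)$ is instead \emph{postponed}. Since $\deg_{G_j}(e)\le\deg(e)/(2S)$, a postponed edge must have lost at least $\deg(e)/2$ of its neighbors to finalization, so the subgraph of postponed edges has maximum edge degree at most $\Deltaline/2$, and (a postponed $e$ still has residual list larger than $\deg(e)$ minus its finalized neighbors, hence larger than its residual degree) this subgraph is again a slack-$1$ instance. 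Recursing, the maximum edge degree halves over $O(\log\Deltaline)$ levels, each costing $q$ calls to the slack-$S$ algorithm plus the defective-coloring overhead, which yields the displayed bound after collecting parameters.

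The \textbf{list color space reduction} is the technical core, and would give, for a suitable $p$ and some $S'\le S$, a bound of the shape
\[
T(\Deltaline,S,C)\ \le\ \polylog\Delta\cdot T\big(O(\Deltaline/p),\,S',\,\lceil C/p\rceil\big)\ +\ T(\Deltaline,1,\poly p).
\]
Partition the palette into $p$ blocks $C_1,\dots,C_p$ of size $\le\lceil C/p\rceil$, assign each edge $e$ an index $b(e)$, let $G_i$ be the subgraph of index-$i$ edges, and set $L_e\leftarrow L_e\cap C_{b(e)}$; the $p$ resulting instances are independent, have palette $\le\lceil C/p\rceil$, and are solved in parallel. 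The difficulty is the simultaneous two-sided control: $|L_e\cap C_{b(e)}|$ is $\Omega(|L_e|/p)$ only when $b(e)$ is a \emph{balanced} block for $e$, while $\deg_{G_{b(e)}}(e)=O(\deg(e)/p)$ only when the index assignment is close to a $(\deg(e)/p)$-defective $p$-coloring. I would (i) regularize the palette so that no block carries more than an $O(1/p)$ fraction of any list, which makes a constant fraction of the $p$ blocks balanced for every edge, and then (ii) compute the index assignment as a \emph{defective list} edge coloring with lists of size $\Omega(p)$ (the balanced blocks of each edge), either directly via the defective-coloring primitive or, when $p$ is non-trivial, by a recursive call on the much smaller palette $\poly p$; combining the resulting $\Omega(1/p)$ list loss with the $\Theta(1/p)$ degree loss keeps $S'$ within a constant factor of $S$. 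Carrying all of this out with only a $\polylog\Delta$ overhead, instead of the larger overhead of the generic argument of \cite{soda20_coloring}, is the main obstacle.

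Finally, \textbf{unrolling}: composing the two reductions and keeping the slack schedule inside a window $[1,\polylog\Delta]$ (so the $O(S^2\log\Deltaline)$ factor is always $\polylog\Delta$), and taking $p\approx\sqrt C$ so that both the palette and the maximum edge degree are squared down at each step, gives a recurrence of the shape
\[
T(\Deltaline,1,C)\ \le\ \polylog\Delta\cdot T\big(O(\sqrt{\Deltaline}),\,1,\,O(\sqrt{C})\big)\ +\ \polylog\Delta.
\]
Only $O(\log\log\Delta)$ levels are needed to reach the base case $\Deltaline=O(1)$, $C=O(1)$, which (with the proper $\poly\Delta$-edge coloring in hand) is solved in $O(\log^*\Delta)\le\polylog\Delta$ rounds, and since each level multiplies the running time by $\log^{O(1)}\Delta$, we obtain $T(\Deltaline,1,\poly\Delta)\le\big(\log^{O(1)}\Delta\big)^{O(\log\log\Delta)}=\log^{O(\log\log\Delta)}\Delta$. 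Adding the single $O(\log^*n)$ for the initial edge coloring proves Theorem~\ref{thm:listedgecoloring}. Besides the color-space-reduction bookkeeping, the remaining care is the parameter schedule: $p$ must stay polynomially related to $C$ (to keep the recursion depth $O(\log\log\Delta)$) while the slack stays in $[1,\polylog\Delta]$ throughout.
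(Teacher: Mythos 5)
Your slack reduction matches the paper's Lemma~\ref{lemma:hardtoeasy} in both statement and proof strategy (degree-dependent defective edge coloring, process the $O(\beta^2)$ classes sequentially, postpone edges whose residual list drops to at most half their degree, recurse on the halved-degree remainder), and the $O(\log^* n)$ preprocessing and the final $O(\log\log\Delta)$-level unrolling are the right shape. The substantive problem is the color space reduction.

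Step~(i) of your color space reduction, ``regularize the palette so that no block carries more than an $O(1/p)$ fraction of any list,'' is not achievable. The partition $\calC=\calC_1\cup\dots\cup\calC_p$ is one fixed partition shared by every edge, and if some edge $e$ has a list $L_e$ that lies entirely inside a single block $\calC_i$ (which you cannot rule out: lists are adversarial and unknown before the partition is fixed), then $|L_e\cap\calC_i|=|L_e|$ and $|L_e\cap\calC_j|=0$ for $j\neq i$. There is thus no guarantee of $\Omega(p)$ ``balanced'' blocks per edge, and your step~(ii), which relies on every edge having $\Omega(p)$ candidate indices, does not get off the ground. The correct general guarantee is the paper's Lemma~\ref{lemma:largeintersection}: for each edge there is \emph{some} $k\in\{1,\dots,p\}$ and $k$ blocks with $|L_e\cap\calC_j|\geq|L_e|/(kH_p)$, where $k$ can be as small as $1$. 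Coping with the full range of $k$ is the technical heart of the paper: edges are stratified by a ``level'' $\ell(e)\approx\log k$; low levels are trivial (the single best block already satisfies the target); for $\deg(e)\geq2^{\ell(e)}$ the blocks are assigned in $O(\log p)$ phases, each a call to a small $(\deg+1)$-list instance of size $O(p)$; and the remaining low-degree edges are finished by one more such call. This unavoidably costs a slack degradation of $\Theta(H_{2p}\log p)=\Theta(\log^2 p)$ per palette reduction by $p$, not the constant factor your step~(ii) asserts, which is exactly why the paper must take $\beta=\polylog\Delta$ rather than $O(1)$.

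A second, independent issue is your choice $p\approx\sqrt{C}$. The helper instance that assigns block indices has the \emph{original} $\Deltaline$ as its degree bound and is only small because its palette is $O(p)$; with $C=\Delta^c$ for $c>1$ this gives $p=\poly\Delta$, so the helper instance has effective degree $\poly\Delta$ and the recursion does not shrink. The paper fixes $p=\sqrt{\Deltaline}$ (not $\sqrt{C}$) and iterates its color-space-reduction step $k=\log_p C=O(1)$ times so that the \emph{outer} palette drops to $O(1)$ (Lemma~\ref{lemma:multiplereduction}) while every helper instance is $T(2\sqrt{\Deltaline}-1,1,2\sqrt{\Deltaline})$; this yields the clean recurrence $T(\Deltaline,1,\Delta^c)\leq\polylog\Deltaline\cdot T\bigl(O(\sqrt{\Deltaline}),1,O(\sqrt{\Deltaline})\bigr)+\polylog\Deltaline$ you wanted, but by a route your argument does not actually support.
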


We start by giving some definitions and simple observations. Given a graph $G=(V,E)$, let $\Delta$ be the maximum degree  of $G$ and $\Deltaline$ be the maximum degree of the line graph of $G$. Note that, as long as $\Deltaline > 0$, we have $\Deltaline = \Theta(\Delta)$. We parametrize the list edge coloring problem by three parameters. For integers $\Deltaline\geq 0$, $C\geq 1$ and a slack parameter $S\geq 1$, $P(\Deltaline,S,C)$ is the list edge coloring problem where the maximum edge degree is $\Deltaline$, the color palette has size $C$, and the lists have slack $S$, that is, the list of each edge $e$ has size strictly greater than $S \cdot \deg(e)$. Note that for $S=1$, the problem corresponds to a $(\deg(e)+1)$-list edge coloring problem. We define $T(\Deltaline, S, C)$ to be the time required to solve $P(\Deltaline,S,C)$. Similarly, $T(\Deltaline, 1, C)$ is the time required to solve $P(\Deltaline,1,C)$. In order to prove \Cref{thm:listedgecoloring}, we will provide an upper bound for $T(\Deltaline,1,C)$.

Let us now make some observations. First, we clearly have $T(\Deltaline',S,C)\leq T(\Deltaline, S,C)$ for all $\Deltaline$, $S$, and $C$, and every $\Deltaline'\leq \Deltaline$.
In addition, note that even if we allow graphs where the line graph has maximum degree at most
$\Deltaline$, the other two parameters $S$ and $C$ might further restrict
the maximum degree. If an edge has degree $d$, its list has to be of size
larger than $d\cdot S$ and therefore $C$ has to be larger than $d\cdot
S$. The maximum line graph degree can therefore be at most the largest integer
that is smaller than $C/S$, i.e., $\lceil C/S\rceil-1$. For every
$\Deltaline$, $S$, and $C$, we therefore have
\[
T(\Deltaline, S, C) = T\left(\min\set{\Deltaline, \left\lceil \frac{C}{S}\right\rceil-1}, S, C\right).
\]
Note that this in particular implies that if $S\geq C$, the maximum
possible degree is $0$ (while the minimum list length is always at
least $1$) and thus, we have solved the problem. Further, if $\Deltaline$
is a constant and some initial $X$-edge coloring of $G$ is given, the
time to solve any edge list coloring instance is $O(\log^* X)$~(see, e.g., \cite{linial87,PanconesiR01}), i.e.,
\[
T\big(O(1), S, C\big) = O(\log^* X).
\]

We will start by using a technique already presented in \cite{soda20_coloring}, where it has been proved that we can reduce a hard list coloring instance to many easier list coloring instances. This technique will allow us to provide an algorithm that requires high slack on the list size, while still allowing us to solve the hardest case where the slack has value $1$. In particular, we will prove the following lemma.
\begin{lemma}\label{lemma:hardtoeasy}
	For any $\beta>1$, any $(\deg(e)+1)$-list edge coloring instance can
	be reduced to consecutively solving $O(\beta^2\log\Delta)$ list
	edge coloring instances with slack $\beta$. More precisely, if an initial edge coloring
	with $X$ colors is given, for all $\beta> 1$,
	\begin{align*}
	T(\Deltaline, 1, C) &\leq O(\log^*X) + O(\beta^2)\cdot
	T\left(\frac{\Deltaline}{2\beta}, \beta, C\right) + T\left(\frac{\Deltaline}{2}, 1, C\right) \\
	&\leq
	O(\beta^2\cdot\log \Deltaline)\cdot T(\Deltaline, \beta, C) + O(\log \Deltaline \log^*X).
	\end{align*}
\end{lemma}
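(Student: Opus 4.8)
The plan is to implement the ``degree reduction via defective coloring, then recurse'' idea sketched in Section~3. Fix $\beta>1$ and an instance of $(\deg(e)+1)$-list edge coloring with an initial $X$-edge coloring. First I would compute, fast and deterministically, a defective edge coloring that splits $E$ into $k = O(\beta)$ classes $E_1,\dots,E_k$ such that every edge $e$ has at most roughly $\deg(e)/(2\beta)$ neighbors of the same class (a $\deg(e)/(2\beta)$-defective $O(\beta)$-edge coloring, or its node-degree analogue applied to the line graph). Such a coloring can be obtained from an initial $X$-edge coloring in $O(\log \Deltaline \cdot \log^* X)$ rounds by known results on defective colorings; this is where the additive $O(\log \Deltaline \log^* X)$ (and, for the first inequality, the $O(\log^* X)$) term comes from, and it is why the hypothesis of an initial coloring is needed. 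Within each class $E_i$, every edge $e$ now has class-degree at most $\deg(e)/(2\beta)$, so its original list of size $>\deg(e)$ has size $> 2\beta \cdot (\text{class-degree of }e)$: the list has slack (at least) $2\beta > \beta$, and the relevant max line-degree has dropped to $\Deltaline/(2\beta)$. The subtlety is that the lists still draw from a palette of size $C$, and the classes are not independent — an edge in $E_i$ still conflicts with edges in $E_j$, $j\neq i$ — so I cannot simply solve the classes in parallel.

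To handle the dependence between classes I would process $E_1,\dots,E_k$ \emph{sequentially}. When it is $E_i$'s turn, for each $e\in E_i$ I shrink $L_e$ to $L_e' := L_e \setminus \{\text{colors already used by neighbors of }e\text{ in }E_1\cup\dots\cup E_{i-1}\}$. An edge $e$ loses at most one color per already-colored neighbor; if $e$ had class-degree $d_i(e) \le \deg(e)/(2\beta)$ in $E_i$ and originally $|L_e| > \deg(e) \ge 2\beta\, d_i(e)$, then even after removing up to $\deg(e) - d_i(e)$ colors used by neighbors outside $E_i$, the remaining list has size $> \deg(e) - (\deg(e) - d_i(e)) \cdot$ ... — here care is needed: in the worst case \emph{all} $\deg(e) - d_i(e)$ other neighbors are already colored, which could exhaust the list. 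The right accounting (and the reason only a constant-factor slack, not a larger one, survives) is to be slightly more generous in the defective coloring: make the class-degree at most $\deg(e)/(2\beta) \cdot (\text{something})$, or more simply observe that we only need the \emph{surviving} list to have size $> \beta \cdot d_i(e)$, i.e.\ we need $|L_e| - (\#\text{colored neighbors of }e) > \beta\, d_i(e)$; an edge for which this fails is simply \emph{deferred} (not colored in this phase). This is exactly the ``if a list gets too small, the edge does not participate'' step. So the phase-$i$ subproblem restricted to the non-deferred edges of $E_i$ is an instance of $P(\Deltaline/(2\beta), \beta, C)$, solvable in $T(\Deltaline/(2\beta), \beta, C)$ rounds, and there are $k = O(\beta)$ such phases, giving the $O(\beta^2)\cdot T(\Deltaline/(2\beta),\beta,C)$ term once we note each phase is itself run on $\le k$ classes — actually the $O(\beta^2)$ rather than $O(\beta)$ comes from the recursion/iteration I describe next, so let me be precise: $O(\beta)$ phases, each cost $T(\Deltaline/(2\beta),\beta,C)$.

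After all $k$ phases, the still-uncolored (deferred) edges form the crux. I claim every deferred edge $e$ has had at least $|L_e| - \beta d_i(e) \ge \deg(e) - \beta\cdot \deg(e)/(2\beta) = \deg(e)/2$ of its neighbors already colored, hence its \emph{current} degree in the uncolored subgraph is at most $\deg(e)/2$, and — since each such edge still has $|L_e'| \ge$ (current degree)$+1$ colors left (it was deferred precisely because the list could not afford slack $\beta$, but it can still afford slack $1$) — the uncolored edges form a $(\deg(e)+1)$-list edge coloring instance with max line-degree $\le \Deltaline/2$. That is the $T(\Deltaline/2, 1, C)$ term: we recurse. Unrolling this recursion on $\Deltaline$ (it halves each time, so $O(\log\Deltaline)$ levels, each contributing $O(\beta)$ calls to $T(\cdot,\beta,\cdot)$ and an $O(\log^* X)$ additive term from its defective coloring) and using monotonicity $T(\Deltaline',\beta,C)\le T(\Deltaline,\beta,C)$ yields the second displayed inequality $T(\Deltaline,1,C)\le O(\beta^2\log\Deltaline)\cdot T(\Deltaline,\beta,C) + O(\log\Deltaline\log^* X)$ — the extra factor $\beta$ beyond the $O(\beta)$ phases coming from the fact that the defective-coloring step at each recursion level can itself be charged $O(\beta)$, or equivalently from redoing the $O(\beta)$-phase loop $O(\log\Deltaline)$ times with a $\beta$-dependent bookkeeping. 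The main obstacle, and the step I would be most careful with, is the list-size accounting in the sequential phases: showing simultaneously that non-deferred edges retain slack $\beta$ \emph{and} that deferred edges retain slack $1$ while their degree drops by a constant factor — the two must be balanced by the single choice of defect parameter $\deg(e)/(2\beta)$, and getting the constants to line up (so that ``deferred'' implies ``$\ge$ half the neighbors colored'') is the heart of the argument. A secondary point to verify is that all intermediate instances genuinely have palette size $C$ (lists only shrink, never move to a new palette) so the third argument of $T$ is unchanged throughout.
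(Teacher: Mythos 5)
You have the right skeleton---compute a defective edge coloring, sweep the color classes sequentially, split each class into \emph{active} edges (residual list still above $\deg(e)/2$) that get colored with the slack-$\beta$ algorithm and \emph{deferred} edges, and recurse on the deferred edges---and the slack accounting is essentially the paper's: an active edge $e$ in class $i$ has $|L_e'|>\deg(e)/2\geq\beta\deg'(e)$ since $\deg'(e)\leq\deg(e)/(2\beta)$, while a deferred edge lost $\geq\deg(e)/2+1$ colors and therefore has $\geq\deg(e)/2+1$ colored neighbors, so the deferred subgraph has line-degree $\leq\Deltaline/2$ and still has slack $1$. So the step you flag as ``the heart of the argument'' is in fact handled correctly.

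The genuine gap is in the number of color classes and where the $\beta^2$ comes from. You posit a $\frac{\deg(e)}{2\beta}$-defective edge coloring with $k=O(\beta)$ classes, computable in $O(\log\Deltaline\cdot\log^*X)$ rounds by unspecified ``known results,'' and then try to produce the missing factor of $\beta$ from ``recursion/iteration bookkeeping.'' Neither half holds up. With $O(\beta)$ phases per level, unrolling over $O(\log\Deltaline)$ levels gives $O(\beta\log\Deltaline)\cdot T(\Deltaline,\beta,C)$, not $O(\beta^2\log\Deltaline)$; the defective-coloring step is a fixed number of communication rounds, not $O(\beta)$ additional calls to $T$, so it cannot supply another $\beta$. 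What the paper actually does is compute a $\frac{\deg(e)}{2\beta}$-defective edge coloring with $O(\beta^2)$ classes in $O(\log^*X)$ rounds (each endpoint of an edge partitions its incident edges into groups of size $\leq 4\beta$, labels them $1,\dots,4\beta$; the class of an edge is essentially the unordered pair of labels it receives, with a $3$-coloring of the residual monochromatic paths/cycles to break ties), and then sweeps these $O(\beta^2)$ classes, so the $\beta^2$ is simply the number of classes---no further bookkeeping is needed. Your variant either needs a construction of an $O(\beta)$-color defective edge coloring in $O(\log^*X)$ rounds, which the folklore you invoke does not provide (and which, if it existed, would actually prove the \emph{stronger} bound $O(\beta\log\Deltaline)\cdot T(\Deltaline,\beta,C)+O(\log\Deltaline\log^*X)$), or it should accept $O(\beta^2)$ classes and drop the hand-waving. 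Relatedly, the additive $O(\log\Deltaline\log^*X)$ term comes from paying $O(\log^*X)$ at each of the $O(\log\Deltaline)$ recursion levels, not from a single defective-coloring computation costing $O(\log\Deltaline\log^*X)$ rounds; using an $O(\log\Deltaline\log^*X)$-round primitive per level would blow the additive term up to $O(\log^2\Deltaline\log^*X)$.
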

Thus, we will provide an algorithm that solves $P(\Deltaline,S, C)$ for a value $S$ of our choice, and by applying \Cref{lemma:hardtoeasy} we will automatically get an algorithm that solves the case where $S = 1$.

In order to prove \Cref{lemma:hardtoeasy}, we will further decompose the problem into many easier problems. Essentially, we will provide a way to split the color space of size $C$, and assign to each edge a smaller color space of size $C / p$, for some parameter $p$ of our choice. In this way, we can recursively solve the problem separately on each subgraph induced by edges that have the same subspace. More formally, we will prove the following lemma.
\begin{lemma}\label{lemma:colorspacereduction}
	Let $p\in[2, C]$ be an integer parameter. If $\Deltaline\geq 1$, $C\geq 2$, and
	$S\geq 24\cdot H_{2p}\cdot \log p$, we can express
	$T(\Deltaline, S, C)$ recursively as
	\[
	T(\Deltaline, S, C) \leq (\log p)\cdot(1+T(2p-1, 1, 2p)) + 
	T\left(\Deltaline, \frac{S}{24\cdot H_{2p}\cdot \log p}, \frac{C}{p}\right).
	\]
\end{lemma}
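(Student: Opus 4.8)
\noindent\emph{Proof plan.} The plan is to split the palette $\{1,\dots,C\}$ into $p$ blocks $C_1,\dots,C_p$ of size at most $\lceil C/p\rceil$ (padding $C$ to a multiple of $p$ if one wants exact sizes --- this only changes constants), assign to every edge $e$ an index $h(e)\in\{1,\dots,p\}$, and recurse on the $p$ edge-sets $G_j:=\{e\in E:h(e)=j\}$, where $e\in G_j$ receives the restricted list $L_e\cap C_j$ over the palette $C_j$. Since the $G_j$ partition $E$ and are therefore pairwise edge-disjoint, the $p$ recursive calls run simultaneously, so this step costs a single $T(\Deltaline,S',C/p)$ with $S'=S/(24H_{2p}\log p)$, provided each $G_j$ is a legal instance of $P(\Deltaline,S',C/p)$. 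The palette bound and the degree bound $\deg_{G_j}(e)\le\deg(e)\le\Deltaline$ hold for free, so the only thing to secure is the slack condition $|L_e\cap C_{h(e)}|>S'\cdot\deg_{G_{h(e)}}(e)$ for every edge. The whole proof thus comes down to choosing $h$ so that the surviving list stays large while the surviving degree becomes small, and computing such an $h$ within the stated budget.

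To control both quantities I would give each edge $e$ a \emph{list of indices} $I_e=\{\,j:|L_e\cap C_j|>\tau_e\,\}$ for a suitable threshold $\tau_e$ of order $|L_e|/p$; a counting argument (each block contributes $\le\lceil C/p\rceil$ colors of $L_e$, each block outside $I_e$ at most $\tau_e$) gives $|I_e|\gtrsim p|L_e|/C\ge pS\deg(e)/C$, while $j\in I_e$ guarantees a surviving list of size $>\tau_e\gtrsim S\deg(e)/p$. Then I would pick $h(e)\in I_e$ by a \emph{defective list edge coloring} of $G$ over the index lists: an assignment in which each edge has at most $D(e)\lesssim\deg(e)/|I_e|$ neighbors of the same index (and $D(e)=0$ once $|I_e|>\deg(e)$). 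Substituting the bound on $|I_e|$ into $D(e)$ gives $\deg_{G_{h(e)}}(e)\lesssim C/(pS)$, which together with $|L_e\cap C_{h(e)}|>\tau_e$ yields the slack condition once $S$ is a sufficiently large multiple of $H_{2p}\log p$ --- i.e. under $S\ge 24H_{2p}\log p$. The threshold $\tau_e$ must be chosen with care (and in general should depend on $\deg(e)$ as well as $|L_e|$), because raising it enlarges the surviving list but shrinks $I_e$ and so weakens the defect bound; the optimal trade-off is a harmonic-weighted choice of thresholds across the blocks, which is what produces the factor $H_{2p}$ rather than a cruder $\log p$ or $p$.

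The crux, and the source of the $(\log p)\,(1+T(2p-1,1,2p))$ term, is computing this defective list edge coloring over the small index universe $\{1,\dots,p\}$ fast. I would do it by $\log p$ rounds of refinement: bisect the current blocks (and the induced edge groups), and decide which half each edge joins by a \emph{proper} $(\deg(e)+1)$-list edge coloring on an auxiliary graph whose line graph has maximum degree at most $2p-1$ over a palette of size $2p$ --- an instance of $P(2p-1,1,2p)$ --- followed by one round of local bookkeeping to update lists and degrees for the next round (the ``$1+$''). The delicate points I expect to grapple with are: arranging the auxiliary graph in each round to have line-graph degree at most $2p-1$, so that a $2p$-color palette suffices no matter how large $\Deltaline$ is, which needs a degree-aware bucketing of the edges around each node before the bisection; proving that each bisection roughly halves every edge's surviving degree while shrinking its surviving list only slightly; and accounting for the slack loss across all $\log p$ rounds so that it multiplies out to exactly $24H_{2p}\log p$. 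With these in hand, summing the three cost contributions and re-verifying that each $G_j$ is a valid $P(\Deltaline,S',C/p)$ instance is routine.
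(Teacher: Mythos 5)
Your overall skeleton matches the paper: partition the color space into roughly $p$ blocks of size at most $C/p$, assign a block index to each edge, recurse on the induced sub-instances, and account the assignment cost as $(\log p)(1+T(2p-1,1,2p))$. But the two places you flag as ``delicate'' are exactly where your sketch, as stated, does not go through, and the paper resolves them with a structure you do not have.

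First, the fixed threshold $\tau_e$ of order $|L_e|/p$ does not give uniform control. Your counting gives $|I_e|\gtrsim p|L_e|/C$, which is vacuous whenever $|L_e|\ll C/p$ (e.g.\ a low-degree edge whose list is barely above $S\deg(e)$), and the resulting surviving list of size $\approx|L_e|/p$ can be far below $S'\deg'(e)$ for such edges. You gesture at an ``optimal trade-off via harmonic-weighted thresholds'', but that is precisely the content of a separate lemma the paper proves (Lemma~\ref{lemma:largeintersection}): for every $e$ there is \emph{some} $k$ and a set of $k$ blocks each meeting $L_e$ in at least $|L_e|/(kH_p)$ colors. Sorting the $|L_e\cap\calC_i|$ and using $\sum_i 1/(iH_p)=1$ gives it in two lines. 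This lemma yields an edge-dependent \emph{level} $\ell(e)$ (essentially $\log_2 k$), which is the adaptive parameter your fixed $\tau_e$ is missing. Without it, the slack verification fails for a range of $(\deg(e),|L_e|)$ pairs.

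Second, the ``defective list edge coloring with defect $\lesssim\deg(e)/|I_e|$, computed by $\log p$ rounds of bisection'' is not a primitive you can invoke, and the bisection mechanism as described is incoherent: you say each round is a binary decision (``which half''), yet the auxiliary problem is over a palette of size $2p$. The paper does not compute a single defective list coloring. Instead it runs $O(\log p)$ \emph{phases indexed by level}: in phase $\ell$, only the edges of level $\ell$ with $\deg(e)\ge 2^{\ell}$ are active, their available index set $J_e$ is pruned to drop any block already chosen by $\ge\deg(e)/2^{\ell-1}$ neighbors (leaving $|J_e|\ge 2^{\ell-1}$), and each real node splits its active edges into groups of size $\le 2^{\ell-2}$ (virtual copies), producing a virtual line graph of degree $\le 2^{\ell-1}-2$. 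This turns the per-phase assignment into an honest $(\deg(e)+1)$-list edge coloring with palette $\{1,\dots,q\}\subseteq\{1,\dots,2p\}$, costing $T(2p-1,1,2p)$. Separately bounding same-phase conflicts ($\le\deg(e)/2^{\ell-2}$), prior-phase conflicts ($\le\deg(e)/2^{\ell-1}$ by the pruning), and later-phase conflicts (a geometric sum plus $2\log p$) yields $\deg'(e)\le 12\log p\cdot\deg(e)/2^{\ell}$. Your sketch has none of this cross-phase accounting, and the ``$1+$'' in the cost corresponds to the one round of pruning $J_e$, not to bisection bookkeeping.

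Third, the low-degree case needs its own treatment, and your one-liner ``$D(e)=0$ once $|I_e|>\deg(e)$'' does not connect to the rest. The paper splits the remaining edges into $E^{(1)}$ ($\deg(e)\ge 2^{\ell(e)}$) and $E^{(2)}$ ($\deg(e)<2^{\ell(e)}$): the second set admits $>\deg(e)$ nonempty candidate blocks, so one more $(\deg(e)+1)$-list edge coloring over $\{1,\dots,q\}$ gives them $\deg'(e)=0$ without disturbing $E^{(1)}$. Edges with very small level ($\ell(e)\le 3$) are also handled separately (pick the block of largest intersection; the bound holds even if all neighbors agree). None of these case splits appears in your plan, and they are not optional: they are what makes \Cref{eq:colorspacereduction} hold for \emph{every} edge rather than only for edges in a convenient parameter regime.
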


In \Cref{subsec:hardtoeasy} we will prove \Cref{lemma:hardtoeasy}. Then, in \Cref{subsec:colorspacereduction} we will prove \Cref{lemma:colorspacereduction}. Finally, in \Cref{subsec:theorem} we will use these lemmas to prove our main result, \Cref{thm:listedgecoloring}.

\subsection{Proof of \Cref{lemma:hardtoeasy}}\label{subsec:hardtoeasy}
While a similar result has been already proved in a more general form in \cite{soda20_coloring}, in order to make our result self-contained, we provide a proof that suits our specific needs.

The high level idea is the following. We want to solve the case where the slack is $1$, that is, the lists may contain just one element more than what is strictly necessary. In other words, if an edge $e$ has degree $\deg(e)$, the list $L_e$ may contain just $\deg(e) +1$ colors, that is the minimum amount that makes the problem always solvable in a greedy manner. While we cannot make the lists larger, we can try to make the degrees smaller. In particular, we will split our graph into many subgraphs of smaller degree, and we will iteratively color these subgraphs. Each time, we will remove from the lists of the edges the colors used by their neighbors in the previous steps, and only edges having a large remaining list will participate in the coloring phase. At the end, we will take care of edges that could not participate in the coloring phase, by recursively applying the same technique on the subgraph induced by those edges. 

In the following, we will consider two different colorings: the color $g(e)$ identifies a temporary color assigned to $e$, and we will consider subgraphs induced by edges of the same color, while the color $c(e)$ will be the final result of the algorithm (initially $c(e) = \bot$ for all $e$, that is, it is unassigned). More formally, the algorithm proceeds as follows:
\begin{enumerate}
	\item If $\Deltaline = O(1)$, solve the problem into $O(\log^* X)$ with any standard list coloring algorithm~(e.g., \cite{linial87,PanconesiR01}). Otherwise,
	\item Compute a $\frac{\deg(e)}{2 \beta}$-defective edge coloring with $O(\beta^2)$ colors. Let $g(e)$ be the color assigned to edge $e$.
	\item Iterate over the $O(\beta^2)$ color classes. For each color $i$, do the following:
	\begin{enumerate}
		\item Each edge $e$ such that $g(e) = i$ removes from its list the colors $c(e')$ used by each neighboring edge $e'$ (if different from $\bot$).
		\item If the new list $L_e$ satisfies $|L_e| > \deg(e)/2$ then the edge is marked as \emph{active}.
		\item Apply the algorithm solving $P(\frac{\Deltaline}{2 \beta}, \beta, C)$ on the subgraph induced by edges $e$ satisfying $g(e) = i$ that are marked \emph{active}.
	\end{enumerate}
	\item Recurse on the subgraph induced by uncolored edges (the ones for which  $c(e) = \bot$).
\end{enumerate}
An example of the execution of this algorithm is shown in Figures \ref{fig:list-edge-col}, \ref{fig:first-color}, \ref{fig:second-color}, and \ref{fig:third-color}. 
We will first provide a simple algorithm that computes the required defective coloring. Then, we will prove that the lists are large enough to apply the algorithm for $P(\Deltaline, \beta, C)$ (the slack must be at least $\beta$). Finally, we will prove an upper bound on the degree of the subgraph induced by uncolored edges, that will allow us to bound the number of recursive steps. This will conclude the proof of \Cref{lemma:hardtoeasy}.

\begin{figure}
	\centering
	\includegraphics[width=0.4\textwidth]{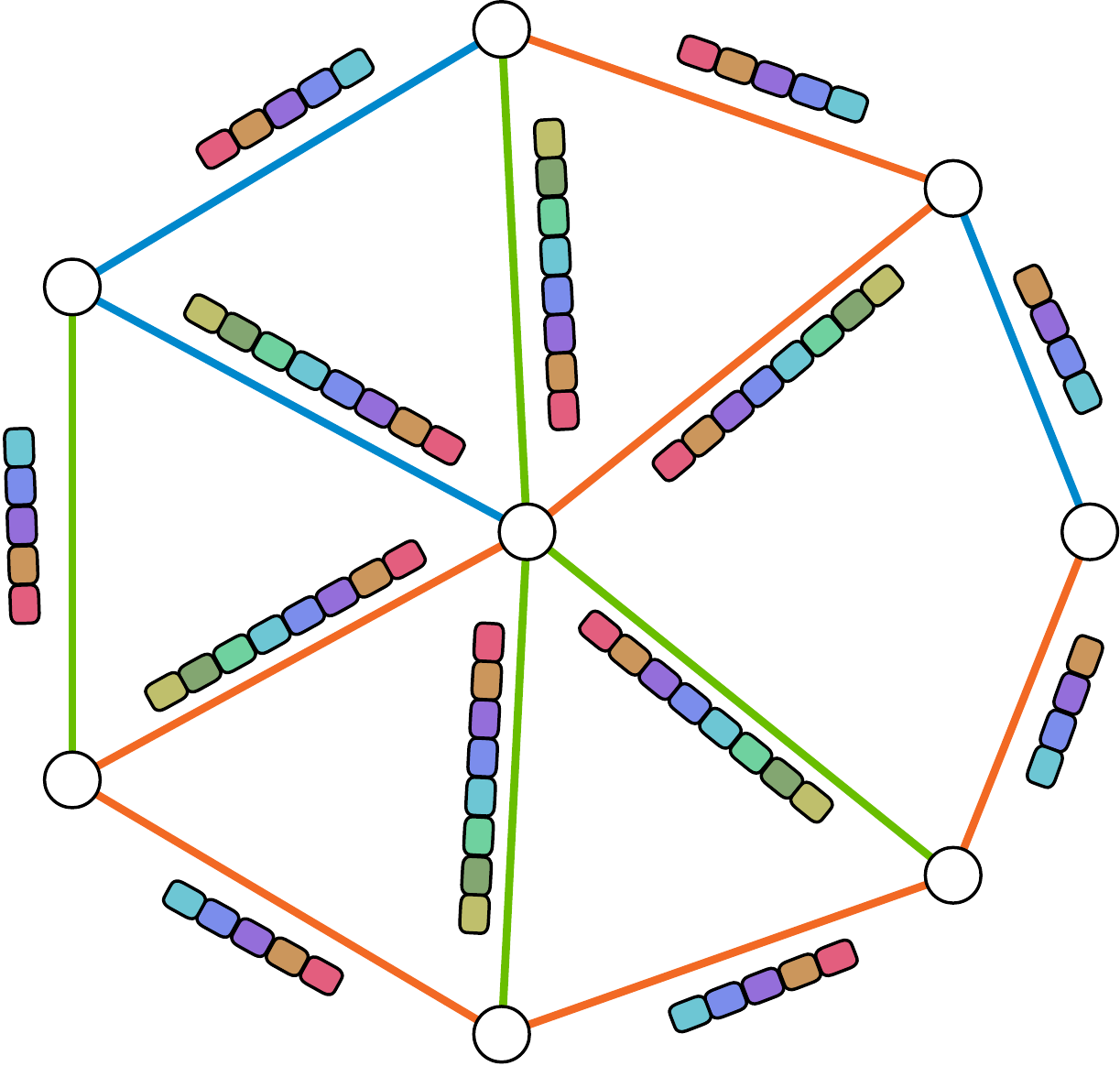}
	\caption{An example of list coloring instance where a defective edge coloring has been computed.  The colors over the edges represent their lists, while the color of each edge $e$ represents $g(e)$.}
	\label{fig:list-edge-col}
\end{figure}

\begin{figure}
	\centering
	\begin{minipage}{.5\textwidth}
		\centering
		\includegraphics[width=.65\linewidth]{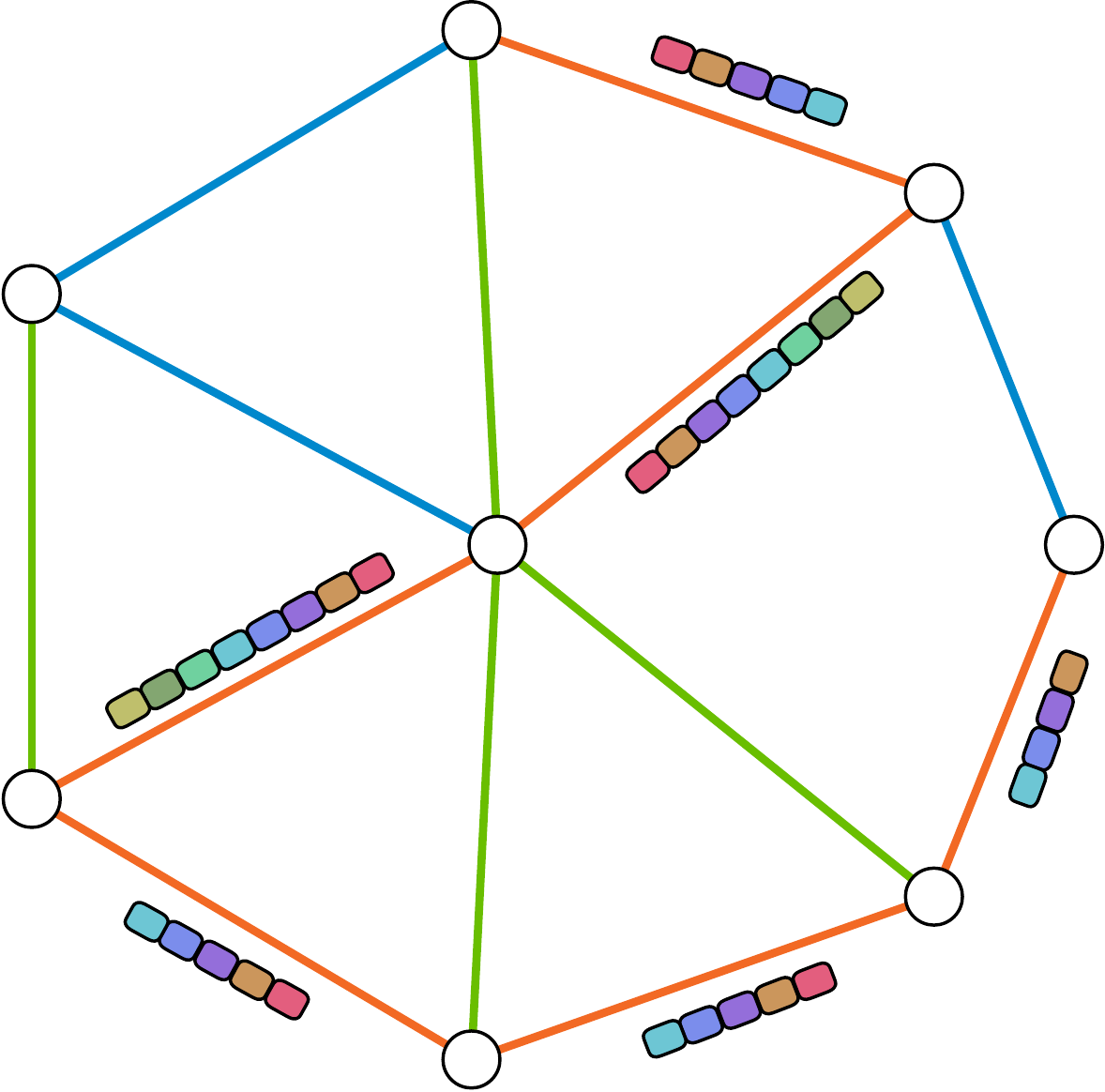}
	\end{minipage}%
	\begin{minipage}{.5\textwidth}
		\centering
		\includegraphics[width=.65\linewidth]{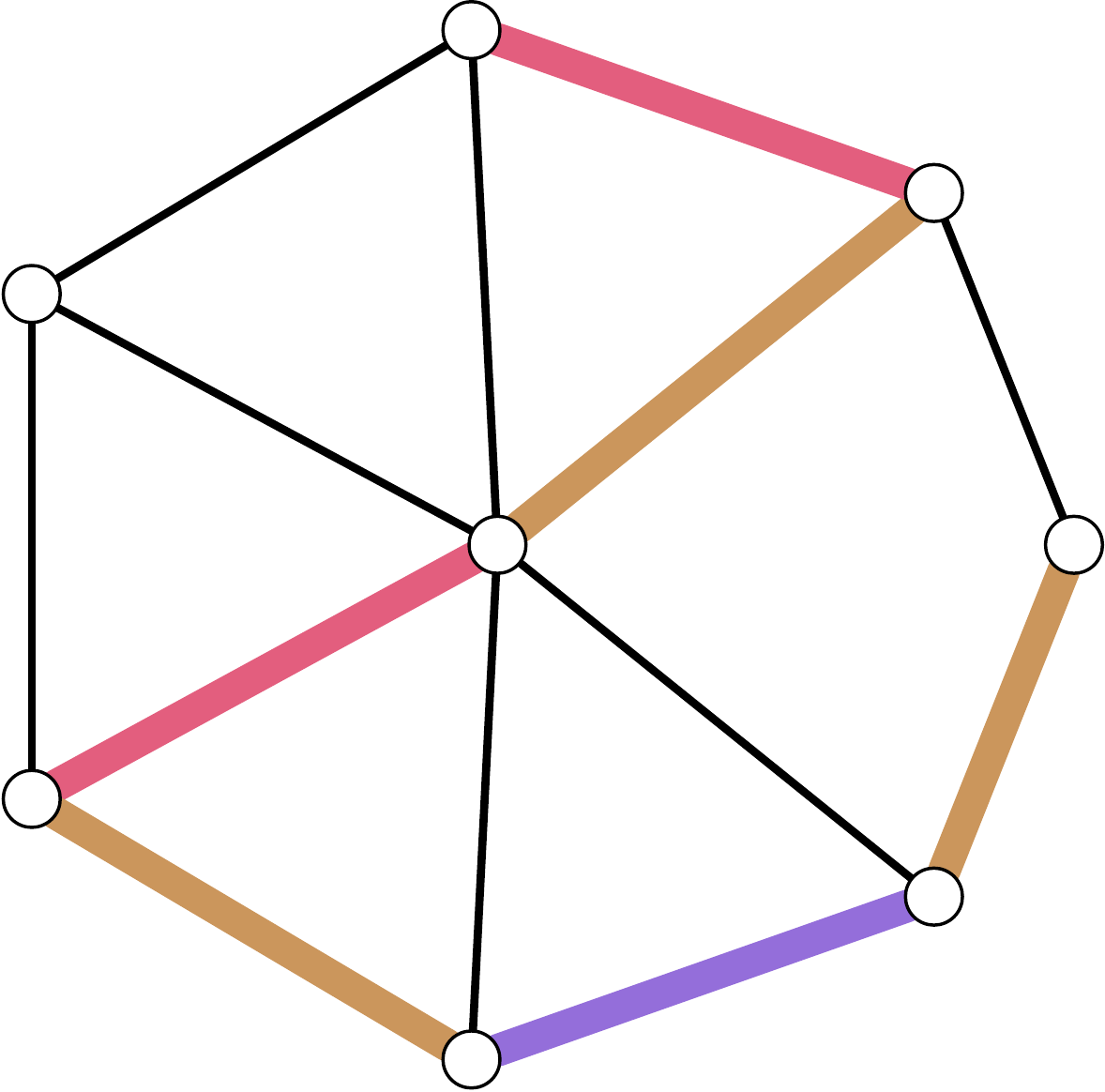}
	\end{minipage}
	\caption{The algorithm that requires slack is applied on the subgraph induced by red edges. The result is shown on the right, where bold edges have been colored.}
	\label{fig:first-color}
\end{figure}

\begin{figure}
	\centering
	\begin{minipage}{.5\textwidth}
		\centering
		\includegraphics[width=.65\linewidth]{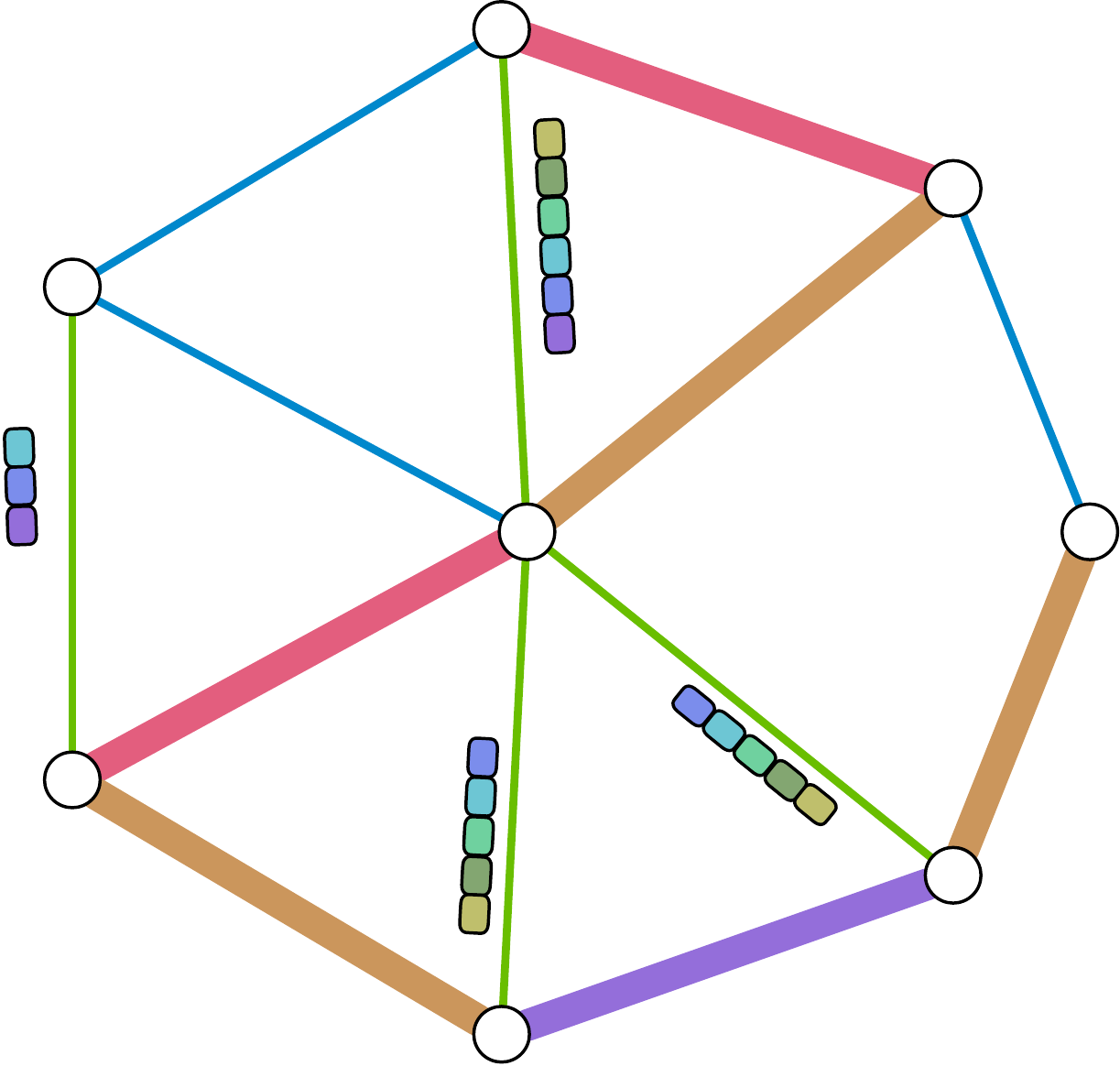}
	\end{minipage}%
	\begin{minipage}{.5\textwidth}
		\centering
		\includegraphics[width=.65\linewidth]{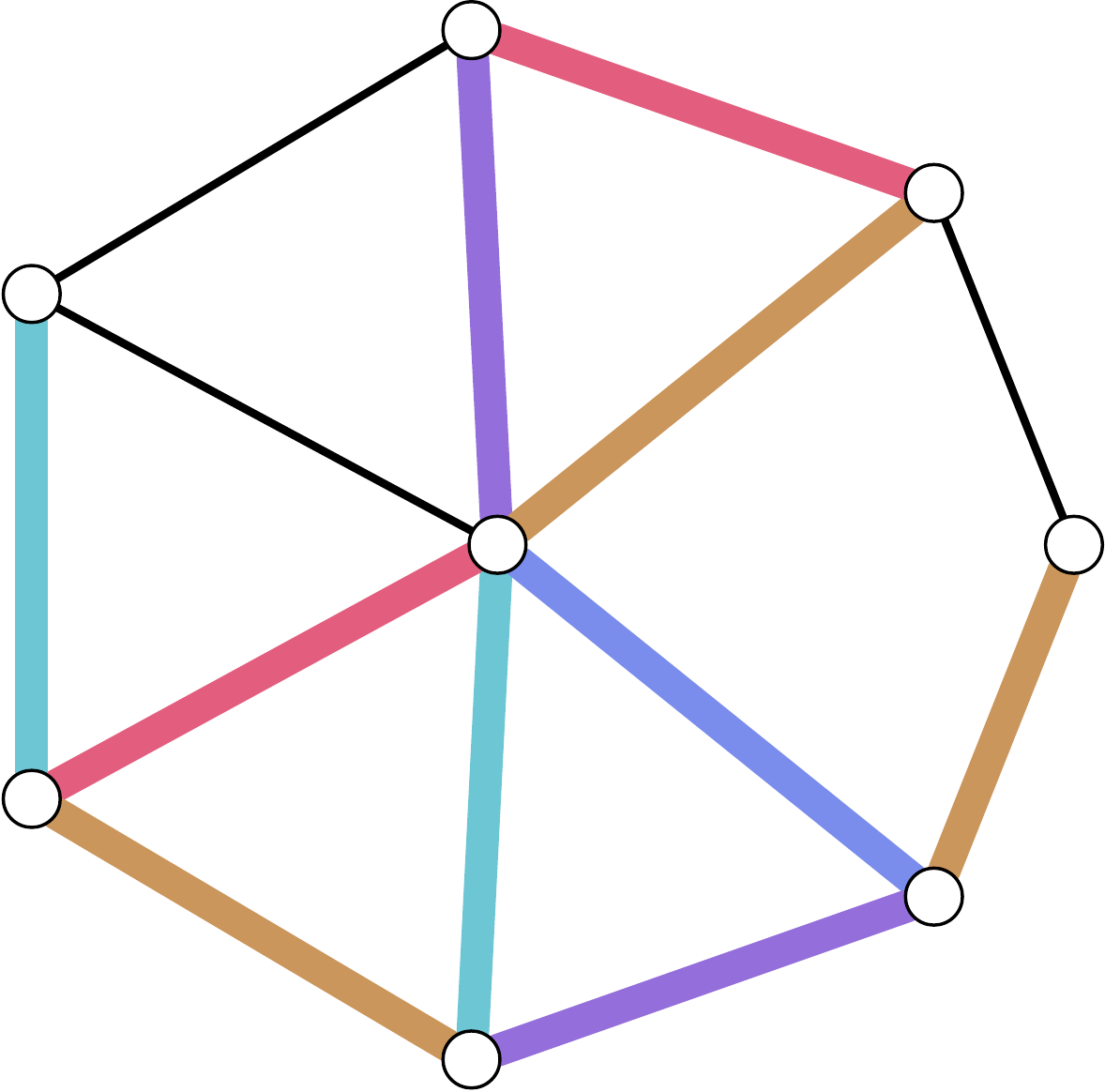}
	\end{minipage}
	\caption{Each green edge has a list of size strictly greater than $\deg(e)/2$, hence all of them are active while applying the coloring algorithm. The result is shown on the right.}
	\label{fig:second-color}
\end{figure}

\begin{figure}
	\centering
	\begin{minipage}{.5\textwidth}
		\centering
		\includegraphics[width=.65\linewidth]{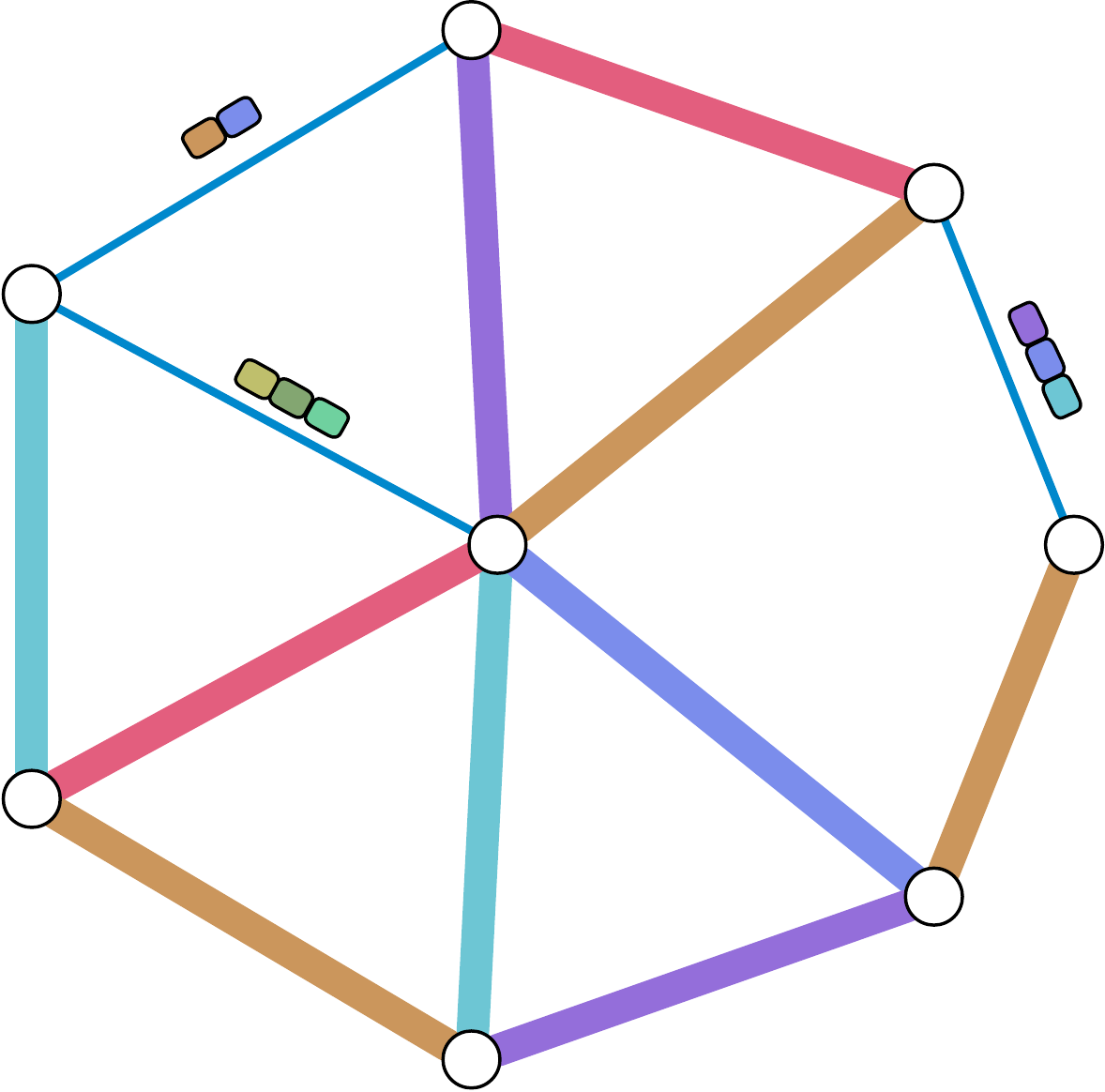}
	\end{minipage}%
	\begin{minipage}{.5\textwidth}
		\centering
		\includegraphics[width=.65\linewidth]{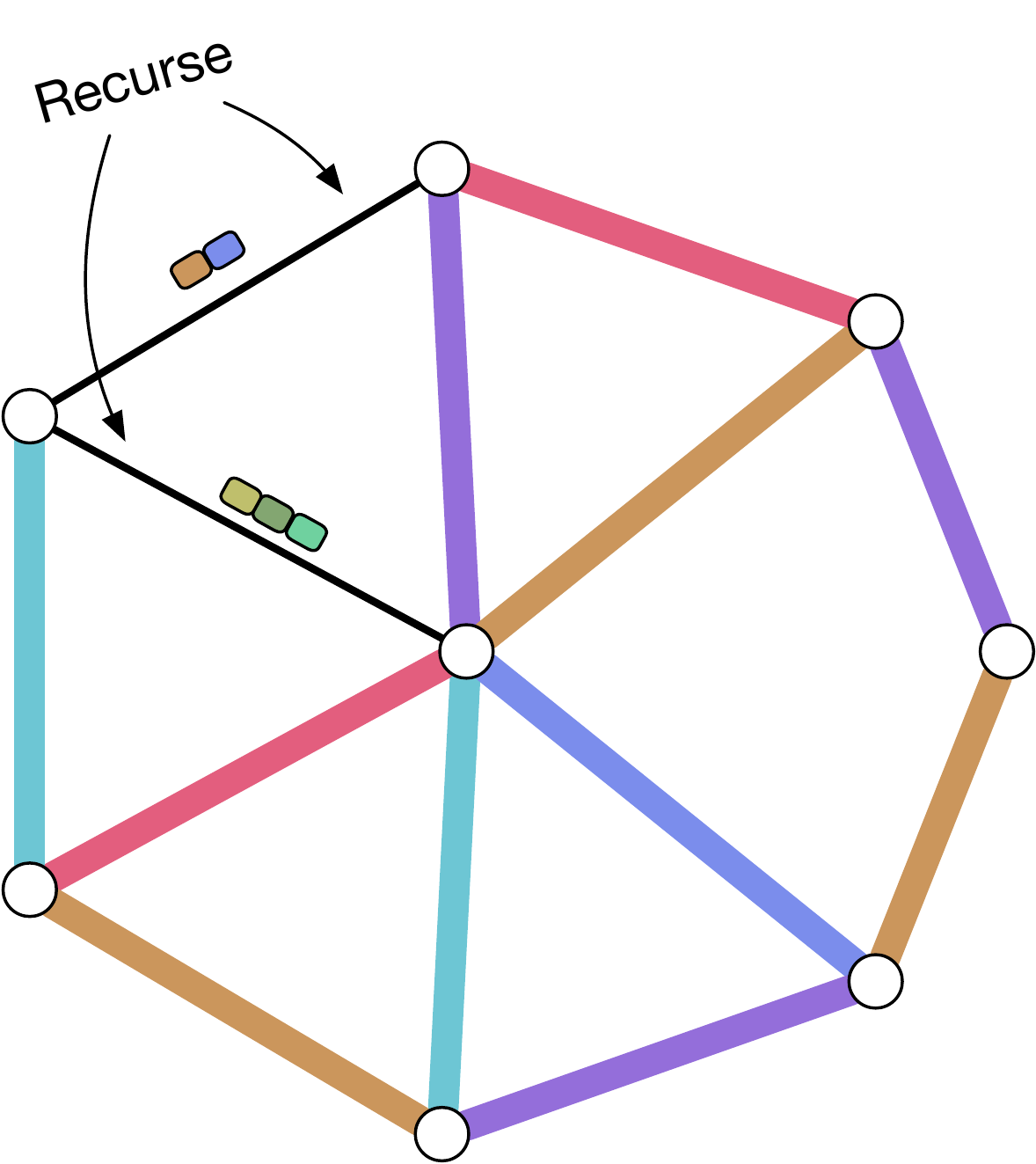}
	\end{minipage}
	\caption{Only one blue edge has a list of size strictly greater than $\deg(e)/2$, and thus is active in the coloring process. We repeat the whole reasoning on the remaining edges.}
	\label{fig:third-color}
\end{figure}

\paragraph{Defective edge coloring.}
We will now prove that it is possible to compute a $\frac{\deg(e)}{2 \beta}$-defective edge coloring with $O(\beta^2)$ colors in $O(\log^*X)$ rounds. The idea is the following: each node $v$ of degree $\deg(v)$ partitions its edges into $\lceil \frac{\deg(v)}{4 \beta}\rceil$ groups of size at most $4 \beta$. The edges in each group are then numbered with values $i$, where $1 \le i \le 4\beta$ (such that within each group, the edges are assigned unique values). For each edge, each node sends the assigned value over the edge. To each edge we assign a temporary color: the pair of values $(i,j)$ that have been transmitted over an edge, such that $i \le j$. Notice that there are at most $2$ edges for each group that have the same temporary color. This can happen when a node sends $i$ over an edge $e$ and $j$ over another edge $e'$, and receives $i$ over $e'$ and $j$ over $e$ (assuming $i \le j$, both edges get the color $(i,j)$). Hence, edges that have the same color \emph{and} are incident to the same group form paths or cycles. We can $3$-color the edges of these paths and cycles independently in $O(\log^*X)$ rounds. The final color of each edge is a triple containing $i$, $j$, and the color obtained in its path/cycle. Note that the number of possible colors is $O(\beta^2)$. We need to give a bound on the defect of an edge. For each endpoint of the edge, the defect is given by the number of groups that the node created, minus $1$ (if there is only one group, then all edges have different color, and the defect is $0$). Thus, the defect of an edge $\{u,v\}$ is given by $\lceil \frac{\deg(u)}{4 \beta}\rceil - 1 + \lceil \frac{\deg(v)}{4 \beta}\rceil - 1 \le \frac{\deg(u) + \deg(v)-2}{4 \beta} \le \frac{\deg(e) }{2 \beta} $.

\paragraph{Enough slack}
We need to ensure that, for each \emph{active} edge $e$, the list of remaining colors, after having removed colors used by the neighbors, satisfies $|L_e| > \beta \deg'(e)$, where $\deg'(e)$ is the degree of $e$ in the graph induced by active edges having the same color. By definition of defective edge coloring, $\deg'(e) \le \frac{\deg(e)}{2 \beta}$. Also, since each edge is marked as \emph{active} only if its list has size at least $|L_e| > \deg(e) /2$, then $|L_e|  > \deg(e) /2 \ge 2\beta\cdot\deg'(e)/2 \ge \beta \deg'(e)$, as required.

\paragraph{Running time}
We need to prove that the graph induced by uncolored edges (the ones where $c(e) = \bot$) is small. Consider an uncolored edge that has \emph{not} been marked as active. Its original list had size $|L_e| \ge \deg(e) + 1$. After removing colors used by the neighbors this list had size at most $\deg(e)/2$ (otherwise it would have been marked as active). This implies that $\deg(e) + 1 - \deg(e) /2 $ colors have been removed, that in turn implies that edge $e$ has at least $\deg(e) + 1 - \deg(e) /2 $  neighboring edges that are already colored, or in other words at most $\deg(e) - (\deg(e) + 1 - \deg(e) /2) = \deg(e) /2 -1$ neighboring edges that are still uncolored. Thus, the maximum degree becomes $\Deltaline / 2$. Hence, we can recurse and color the remaining edges in time $T(\Deltaline / 2, 1, C)$.

We proved that $T(\Deltaline, 1, C) $ is at most  $O(\log^*X) + O(\beta^2)\cdot
T\left(\frac{\Deltaline}{2\beta}, \beta, C\right) + T\left(\frac{\Deltaline}{2}, 1, C\right) $, where the first part is required to compute the defective edge coloring, the second part is required to solve $P(\frac{\Deltaline}{2\beta}, \beta, C)$ on each subgraph, and the last part is required to recurse on the remaining edges. By applying the same reasoning $O(\log \Deltaline)$ times we get the following:
\begin{align*}
T(\Deltaline, 1, C)  &\le \sum_{i=1}^{O(\log \Deltaline)} \left( O(\log^*X) + O(\beta^2) \cdot T\left(\frac{\Deltaline}{2\beta}, \beta, C\right) \right) + O(\log^* X)\\
&\le \sum_{i=1}^{O(\log \Deltaline)} \left( O(\log^*X) + O(\beta^2) \cdot T\left(\Deltaline, \beta, C\right) \right) + O(\log^* X)\\
&\le O(\beta^2\cdot\log \Deltaline)\cdot T(\Deltaline, \beta, C) + O(\log \Deltaline \log^*X).
\end{align*}

\subsection{Proof of \Cref{lemma:colorspacereduction}}\label{subsec:colorspacereduction} 
In this section we will show how to decompose the problem $P(\Deltaline,S,C)$ into many easier problems, and for this we will make use of the concept of list color space reduction, introduced in \cite{soda20_coloring}. Assume that we are given a list coloring instance with
lists consisting of colors from a color space of size $C$, and assume
that we are given a parameter $p$ and a partition of the color space
into $q=O(p)$ subspaces of size at most $C/p$. In a list color space
reduction, to each of the nodes (or edges in the case of edge coloring)
is assigned one of the $q$ subspaces of the overall color space. The new
list of the node is then given by the intersection of the old list with the
chosen color space. This divides the list coloring problem into
$q$ independent problems, each with a color space of size at most
$C/p$. These $q$ problems can then be solved in parallel, by applying the same algorithm on each subgraph induced by edges that chose the same color space. In the new
list coloring problems, the list size and degree of a node might
become smaller and the main objective is to make sure that the degree
of a node does not decrease at a much slower rate than the list
size. Hence, we need to reduce the degree of the subgraphs as much as possible, while trying to keep the size of the new lists as large as possible.

\begin{figure}
	\centering
	\includegraphics[width=0.9\textwidth]{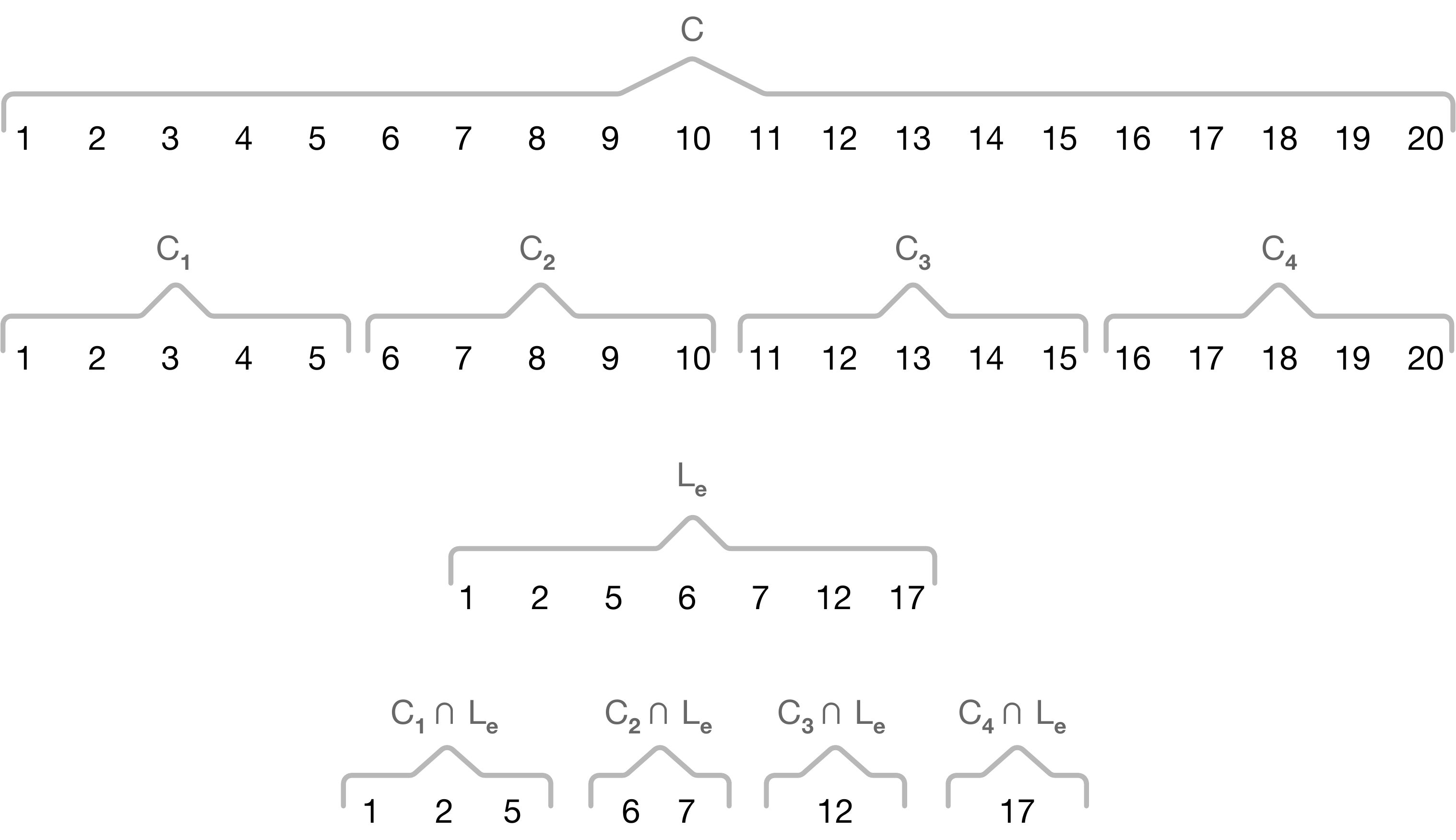}
	\caption{Example of list partitioning with $C=20$ and $p=4$, where the list given to an edge $e$ has size $7$. The indices of lists with large intersections are $I=\{1,2\}$, since $|C_1|,|C_1|\ge 2 \ge \frac{7}{2 \cdot H_4}$.}
	\label{fig:partition}
\end{figure}

We will start by proving a technical lemma, that shows that no matter how the original list instance looks like, we can always find enough subspaces that have large enough intersection with the original list. An example of an application of this lemma is shown in Figure \ref{fig:partition}.
\begin{lemma}\label{lemma:largeintersection}
	Assume that we have a list edge coloring instance for a graph
	$G=(V,E)$ where all list
	colors are from a color space $\calC$ and assume that we are given a
	partition of $\calC$ into $p\geq 1$ parts
	$\calC_1,\dotsc,\calC_p$. Then, for every edge $e\in E$ there exists
	an integer $k\in\set{1,\dotsc,p}$ for which there are $k$ indices
	$I\subseteq \set{1,\dotsc,p}$, $|I|=k$ such that for all $j\in I$,
	\[
	|L_e \cap C_j| \geq \frac{|L_e|}{k\cdot H_p}.
	\]
\end{lemma}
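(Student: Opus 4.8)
The plan is to reduce the statement to a one-line averaging/counting argument on the intersection sizes $|L_e \cap C_i|$. First I would fix an edge $e\in E$ and set $a_i := |L_e\cap C_i|$ for $i\in\{1,\dots,p\}$. Since $C_1,\dots,C_p$ partition $\calC$ and $L_e\subseteq\calC$, every color of $L_e$ lies in exactly one part, so $\sum_{i=1}^p a_i = |L_e|$ (an exact equality, not just a lower bound — this will matter). I would then relabel the parts so that $a_1\ge a_2\ge\cdots\ge a_p$.

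The key reduction is the observation that it suffices to produce a single index $k\in\{1,\dots,p\}$ with $a_k \ge \tfrac{|L_e|}{k\cdot H_p}$. Indeed, once such a $k$ exists, the set $I$ consisting of the $k$ indices carrying the $k$ largest values of $a_i$ has $|I|=k$, and every $j\in I$ satisfies $a_j\ge a_k\ge \tfrac{|L_e|}{k\cdot H_p}$, which is exactly the conclusion of the lemma.

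To show such a $k$ exists, I would argue by contradiction: assume $a_k < \tfrac{|L_e|}{k\cdot H_p}$ for every $k\in\{1,\dots,p\}$ (using the sorted labeling). Summing these inequalities over $k=1,\dots,p$ gives
\[
|L_e| \;=\; \sum_{k=1}^p a_k \;<\; \frac{|L_e|}{H_p}\sum_{k=1}^p \frac1k \;=\; \frac{|L_e|}{H_p}\cdot H_p \;=\; |L_e|,
\]
a contradiction. Hence the required $k$, and with it the index set $I$, exists. (The harmonic weights $1/k$ telescoping against the definition $H_p=\sum_{k=1}^p 1/k$ is precisely why the bound in the lemma features $H_p$.)

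I do not expect a genuine obstacle here; the only points that need care are ensuring the parts really do partition the color space so that $\sum_i a_i$ equals $|L_e|$ exactly, and being precise that "$k$ indices with large intersection" is obtained by taking the $k$ largest $a_i$'s rather than an arbitrary $k$-subset. Everything else is the elementary counting computation displayed above.
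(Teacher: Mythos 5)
Your proof is correct and follows essentially the same route as the paper: sort the intersection sizes $a_i = |L_e\cap\calC_i|$ in decreasing order, argue by contradiction that some $k$ must satisfy $a_k \ge |L_e|/(k\cdot H_p)$ by summing the harmonic bounds, and then take the $k$ largest parts as the index set $I$. Your write-up is if anything slightly more explicit than the paper about why finding a single such $k$ (in sorted order) suffices, but the mathematical content is identical.
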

\begin{proof}
	For the sake of contradiction, assume that the claim of the lemma is
	not true for some edge $e\in E$. For simplicity, assume that the
	color subspaces $\calC_i$ are sorted by decreasing cardinality of
	$L_e\cap \calC_i$, i.e., assume that $|L_e\cap \calC_1|\geq
	|L_e\cap\calC_2|\geq \dotsc\geq |L_e\cap\calC_p|$. We then have 
	\begin{equation}\label{eq:smallsublists}
	\forall i\in \set{1,\dotsc,p}\,:\,|L_e\cap\calC_i|
	<\frac{|L_e|}{i\cdot H_p}.
	\end{equation}
	Hence, we obtain
	\[
	|L_e| = \sum_{i=1}^p |L_e\cap \calC_i| 
	\stackrel{\eqref{eq:smallsublists}}{<}
	\sum_{i=1}^p\frac{|L_e|}{i\cdot H_p} = |L_e|,
	\]
	which is a contradiction.
	
	This implies that \Cref{eq:smallsublists} is not true, and thus that there is a smallest
	$k\in\set{1,\dotsc,p}$ for which $|L_e\cap\calC_k|
	\geq|L_e|/(k\cdot H_p)$, and since for $i<k$, we have $|L_e\cap
	\calC_i|\geq |L_e\cap \calC_k|$, the claim of the lemma holds for
	an index set of size $k$. 
\end{proof}

\begin{figure}
	\centering
	\includegraphics[width=\textwidth]{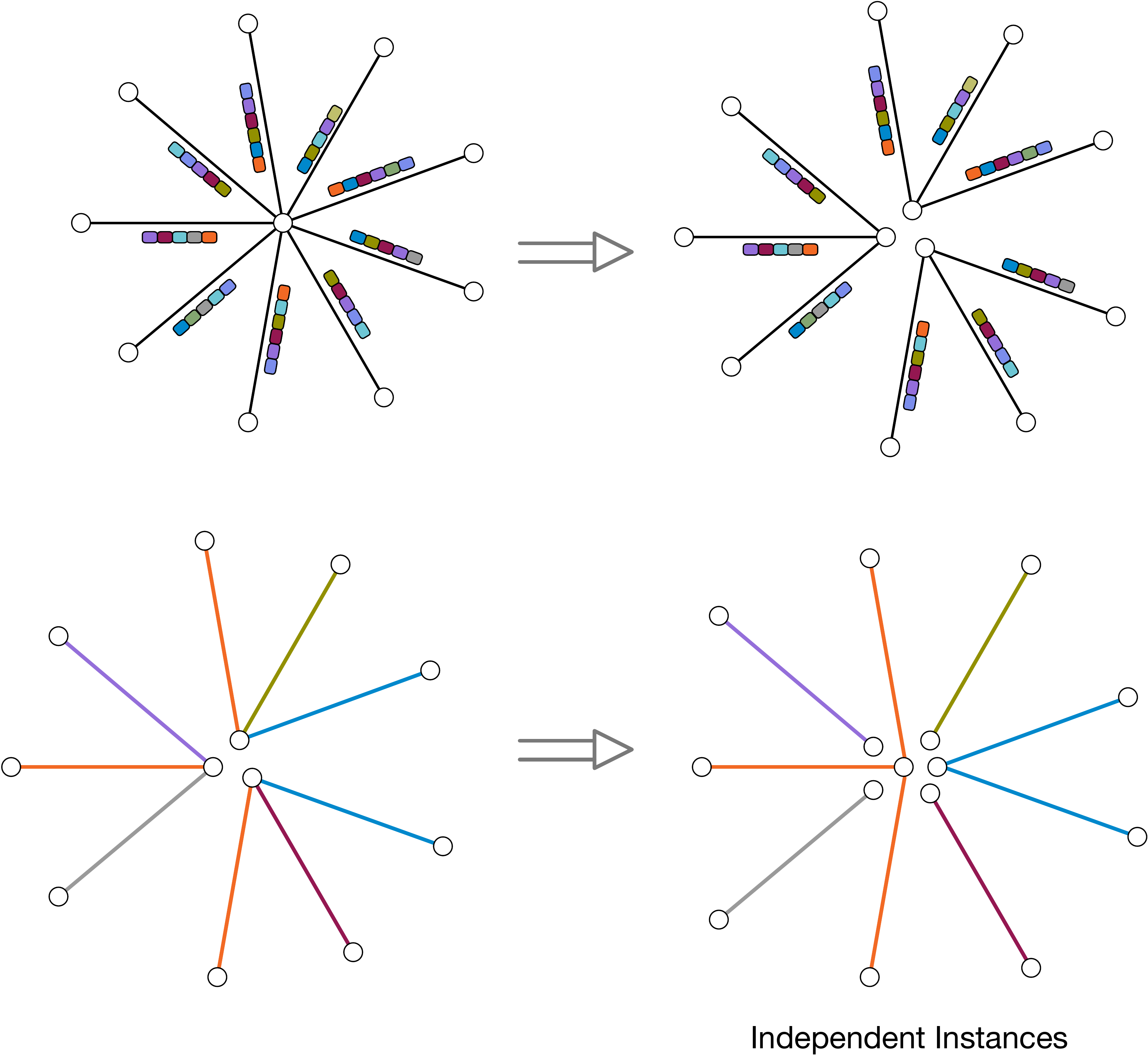}
	\caption{The list of each edge represents subspaces with large enough intersection with the original list. We split real nodes into multiple virtual nodes of smaller degree to obtain feasible list coloring instances.}
	\label{fig:decomposition}
\end{figure}

In the rest of the section, we will use the result of \Cref{lemma:largeintersection} to prove \Cref{lemma:colorspacereduction}. 
    Assume that we are given a list edge coloring instance of a graph
	$G=(V,E)$, where each list consists of colors from a color space
	$\calC$ of size $C:=|\calC|$. Assume further that we are given an
	integer parameter $p\in[2,C]$ and consider an arbitrary partition of
	the color space $\calC$ into $q\leq 2p$ subspaces
	$\calC_1,\dotsc,\calC_q$ of size at most $C/p$ (note that such a
	partition always exists). Assume that each edge $e\in E$ has a list
	$L_e$ of size larger than $(24\cdot H_{2p}\cdot \log p)\cdot \deg(e)$.
	In order to reduce the color space from size $C$ to at most $C/p$ we
	assign one of the $q$ subspaces $\calC_1,\dotsc,\calC_q$ to each edge
	$e\in E$. This generates $q$ independent list edge coloring
	instances that can then be solved in parallel. More precisely, for
	each $e\in E$, we
	define $i_e\in\set{1,\dotsc,q}$ to be (index of the) color subspace
	assigned to edge $e$. When
	assigning color subspace $\calC_{i_e}$ to $e$, the list $L_e$
	of $e$ is reduced to $L_e':=L_e\cap \calC_{i_e}$ and the degree of $e$
	is reduced to $\deg'(e)$, where $\deg'(e)$ is the number of
	neighboring edges $f\in E$ with $i_f=i_e$. We will show how to assign a color subspace
	to each edge such that for every edge $e$,
	\begin{equation}\label{eq:colorspacereduction}
	\deg'(e) \leq 24\cdot H_q\cdot \log p\cdot
	\frac{|L_e'|}{|L_e|}\cdot \deg(e)
	\leq 24\cdot H_{2p}\cdot \log p\cdot
	\frac{|L_e'|}{|L_e|}\cdot \deg(e).
	\end{equation}
	Note that the remaining $q$ independent list edge coloring
	instances can then be solved in time $T(\Deltaline, S/(24H_{2p}\log
	p),C/p)$, since \Cref{eq:colorspacereduction} implies that, compared to the ratio between the new and the old degrees, the ratio between the new and the old lists decreased by at most a $(24 \cdot H_{2p} \cdot \log
	p)$ factor. To prove the lemma, we need to show that the
	color subspaces can be assigned in time $(\log p)\cdot (1 + T(2p-1, 1, 2p))$.
	
	By \Cref{lemma:largeintersection}, for each edge $e\in E$, there is
	an integer $\ell(e)\in \set{0,\dotsc,\lfloor \log_2 q\rfloor}$ such
	that there exists a set $I\subseteq \set{1,\dotsc,q}$ of size
	$|I|\geq 2^{\ell(e)}$, where for each $i\in I$,
	$|L_e\cap\calC_i|\geq |L_e|/(2^{\ell(e)+1}\cdot H_q)$. In the
	following, we call $\ell(e)$ the level of edge $e$.
	
	If an edge $e$ is in a level $\ell(e)\leq 3$, we can assign the
	color subspace $\calC_i$ to $e$ that has the largest intersection
	with $L_e$. Note that, for $\ell(e)\leq 3$, by definition,
	$|L_e\cap \calC_i|\geq |L_e|/(16H_q)$, and thus the required
	condition of \Cref{eq:colorspacereduction} is satisfied even if all neighboring edges $f$ of $e$
	choose the same color subspace. We partition into two sets the remaining edges (i.e., the edges
	$e$ with $\ell(e)>3$): $E^{(1)}$ is the
	set of edges with $\ell(e)>3$ and for which
	$\deg(e)\geq 2^{\ell(e)}$, whereas $E^{(2)}$ is the set of edges
	with $\ell(e)>3$ and for which $\deg(e)< 2^{\ell(e)}$. We further partition the edges in
	$E^{(1)}$ according to their levels, and for
	each $\ell\in\set{4,\dotsc,\lfloor\log q\rfloor}$, we use
	$E^{(1)}_\ell$ to denote the edges $e\in E^{(1)}$ with
	$\ell(e)=\ell$. In the following, we first assign a color subspace
	$i_e$ to each edge $e\in E^{(1)}$ and we afterwards assign a color
	subspace $i_e$ to the edges in $E^{(2)}$.
	
	The edges in $E^{(1)}$ are assigned in $O(\log p)$ phases
	$4,\dotsc, \lfloor\log q\rfloor$. In phase
	$\ell\in \set{4,\dotsc, \lfloor \log p\rfloor}$, all edges
	$e\in E^{(1)}_\ell$ are active and choose a color subspace
	$i_e\in\set{1,\dotsc,q}$. At the beginning of phase $\ell$,
	for each edge $e\in E^{(1)}_\ell$, we determine the set
	$J_e\in\set{1,\dotsc,q}$ such that (I) for all $j\in J_e$,
	$|L_e\cap\calC_j|\geq |L_e|/(2^{\ell+1}\cdot H_q)$ and (II) the
	total number of neighboring edges of $e$ that have already
	chosen color subspace $\calC_j$ in an earlier phase is at most
	$\deg(e)/2^{\ell-1}$. We know that there are at least $2^\ell$ indexes $j$, where
	$j\in\set{1,\dotsc,q}$, for which
	$|L_e\cap\calC_j|\geq |L_e|/(2^{\ell+1}\cdot H_q)$. For at most
	$2^{\ell-1}$ of them, there can be $\deg(e)/2^{\ell-1}$ or more
	neighboring edges that have already chosen subspace $\calC_j$ and we
	thus have $|J_e|\geq 2^{\ell-1}$. For each node $v\in V$, we then
	divide its edges in $E^{(1)}_\ell$ into at most
	$\lceil\deg(v)/2^{\ell-2}\rceil$ groups of size at most
	$2^{\ell-2}$. For each of the groups, we create a virtual copy of
	$v$ so that the resulting virtual graph has a maximum degree of
	$2^{\ell-2}$. See Figure \ref{fig:decomposition} for an example. Note that the line graph of the virtual graph thus has
	a maximum degree of at most $2^{\ell-1}-2$. Our goal now is to
	assign to each edge $e\in E^{(1)}_\ell$ one of the subspaces in
	$J_e$ such that the edges that belong to the same virtual copy of
	some node $v$ are assigned different color subspaces. Since the line
	graph of the virtual graph has maximum degree at most $2^{\ell-1}-2$
	and each edge $e\in E_\ell$ has a set of $|J_e|> 2^{\ell-1} - 1$
	available color subspaces to choose from, the assignment of color
	subspaces is a $(\deg(e)+1)$-list edge coloring problem in the
	virtual graph. The color space of this list edge coloring problem is
	$\set{1,\dotsc,q}\subseteq\set{1,\dotsc,2p}$.  We can therefore solve this
	list edge coloring problem recursively in time
	$T(2^{\ell-1},1,q)\leq T(2p-1,1,2p)$. The time required to run a single
	phase is thus $1$ round for determining the set $J_e$ and at most
	$T(2p-1,1,2p)$ rounds for assigning a subspace $j\in J_e$.
	
	To conclude the discussion of the edges in $E^{(1)}$, we show that
	at the end of the above process, each edge $e\in E^{(1)}$ has at
	most $\deg(e) / 2^{\ell-2}$ neighboring edges $f$ for which $i_e=i_f$. In phase
	$\ell$, an edge of level $\ell$ can only choose color subspace
	$\calC_j$ if there are at most $\deg(e)/2^{\ell-1}$ neighboring
	edges that have already picked $\calC_j$ in previous phases. In the
	following, consider an edge $e=\set{u,v}$ for two nodes $u,v\in V$.
	In phase $\ell$, each node $u\in V$ assigns the same color subspace
	$\calC_j$ to at most $\lceil\deg(u)/2^{\ell-2}\rceil$ of its
	edges. Edge $e$ therefore has at most
	\[
	\left\lceil\frac{\deg(u)}{2^{\ell-2}}\right\rceil-1 +
	\left\lceil\frac{\deg(v)}{2^{\ell-2}}\right\rceil-1 \leq
	\frac{\deg(u)}{2^{\ell-2}} - \frac{1}{2^{\ell-2}} +
	\frac{\deg(v)}{2^{\ell-2}} - \frac{1}{2^{\ell-2}} =
	\frac{\deg(e)}{2^{\ell-2}}
	\]
	neighboring edges that get assigned the same color subspace in phase
	$\ell$ and thus at the end of phase $\ell$, $e$ has at most
	$\deg(e)/2^{\ell-1}+\deg(e)/2^{\ell-2}$ neighboring edges with the
	same color subspace $\calC_{i_e}$.
	
	To conclude the discussion of edges in $E^{(1)}_{\ell}$ for some
	given level $\ell$, we need to bound
	the number of neighboring edges that choose the same color subspace
	in later phases. Consider again edge $e=\set{u,v}\in
	E^{(1)}_\ell$ and let $D_e$ be the number of neighboring edges of
	$e$ that choose color space $i_e$ in phases $\ell'>\ell$. In each phase $\ell'>\ell$,
	every node $w\in V$ assigns at most
	$\lceil(\deg(u)-x_w)/2^{\ell'-2}\rceil$ edges to each color
	subspace, where $x_w$ denotes the number of edges of $w$ that were
	assigned a color subspace before phase $\ell'$. Since $e$ is
	assigned a color subspace before in phase $\ell<\ell'$, we clearly have
	$x_u,x_v\geq 1$. The value of $D_e$ can thus be bounded as
	\begin{eqnarray*}
		D_e 
		& \leq &
		\sum_{\ell'=\ell+1}^{\lfloor\log q\rfloor}
		\left(\left\lceil\frac{\deg(u)-1}{2^{\ell'-2}}\right\rceil +
		\left\lceil\frac{\deg(v)-1}{2^{\ell'-2}}\right\rceil\right)\\
		& < &
		\sum_{\ell'=\ell+1}^{\lfloor\log q\rfloor}
		\left(\frac{\deg(u)+\deg(v)-2}{2^{\ell'-2}}+2\right)\\
		& \leq &
		\sum_{\ell'=\ell+1}^{\lfloor\log q\rfloor}
		\left(\frac{\deg(e)}{2^{\ell'-2}} + 2\right)\\
		& < &
		\frac{\deg(e)}{2^{\ell-2}} + 2\log p.
	\end{eqnarray*}
	The last inequality follows because the number of phases
	$\ell'>\ell\geq 4$ is at most $\lfloor\log q\rfloor - \ell\leq
	\log(2p)-4 < \log p$. Overall, the number of neighboring edges of
	$e$ that choose color subspace $i_e$ is therefore at most
	\[
	\deg'(e)\leq \frac{\deg(e)}{2^{\ell-1}} + \frac{2\deg(e)}{2^{\ell-2}} + 2\log p
	\leq
	(10 + 2\log p)\cdot \frac{\deg(e)}{2^{\ell}}
	\leq (12\log p)\cdot \frac{\deg(e)}{2^{\ell}}.
	\]
	The inequality follows because for edges in $e\in E^{(1)}$, we have
	$\deg(e)\geq 2^{\ell(e)}$ and because $p\geq 2$.  By the choice of
	the level $\ell$ of $e$, we have
	$|L_e'|=|L_e\cap \calC_{i_e}|\geq |L_e|/(2^{\ell+1}H_{q})$ and
	\Cref{eq:colorspacereduction} therefore holds after assigning a
	color subspace to all edges in $E^{(1)}$.
	
	To show \Cref{eq:colorspacereduction} for all edges, it remains to
	assign a color subspace to each edge $e\in E^{(2)}$. Recall that for
	each $e\in E^{(2)}$, we have $\deg(e)<2^{\ell(e)}$. By the
	definition of the level $\ell(e)$, there are at least $2^{\ell}$
	color subspaces $i$ for which
	$|L_e\cap \calC_{i}|\geq |L_e|/(2^{\ell(e)+1}\cdot H_q)$ and thus in
	particular $L_e\cap \calC_{i}\neq 0$. Hence, edge $e$ can choose
	among more than $\deg(e)$ non-empty color subspaces. We can
	therefore assign a color subspace to $e$ such that no neighboring
	edge of $e$ is assigned the same color subspace. The assignment of
	color subspaces to the edges in $E^{(2)}$ is done by solving a
	$(\deg(e)+1)$-list edge coloring problem in a graph with maximum
	line graph degree at most $q-1\leq 2p-1$ and a color space of size
	at most $q\leq 2p$. For every edge $e\in E^{(2)}$, we get
	$\deg'(e)=0$ and thus \Cref{eq:colorspacereduction} is trivially
	satisfied for the edges in $E^{(2)}$. \Cref{eq:colorspacereduction}
	remains satisfied for edges in $E^{(1)}$ because the assignment to
	color subspaces of edges in $E^{(2)}$ does not add any new
	conflicts. The time for assigning the color subspaces to the edges
	in $E^{(2)}$ is given by the time to learn the subspaces that are still free
	(i.e., not assigned to a neighboring edge in $E^{(1)}$), that can be done in one round, and the time $T(2p-1,1,2p)$ to solve the resulting $(deg(e)+1)$-list edge
	coloring problem. The overall time for assigning the color subspaces
	to all edges is at most $(\log p)\cdot(1+T(2p-1,1,2p))$, since
	there are at most $\log2p-3 = \log p -2$ phases. This concludes the
	proof.

\subsection{Putting things together}\label{subsec:theorem}
In this section, we will prove \Cref{thm:listedgecoloring}. Informally, by \Cref{lemma:hardtoeasy} we can express $T(\cdot, 1, \cdot)$ as a function of $T(\cdot, \beta, \cdot)$, while by \Cref{lemma:colorspacereduction} we can express $T(\cdot, \beta, \cdot)$ as a function of both $T(\cdot, 1, \cdot)$ and  $T(\cdot, \beta', \cdot)$, for some $\beta' < \beta$. We will start by applying \Cref{lemma:colorspacereduction} multiple times, obtaining $T(\cdot, \beta, \cdot)$ only as a function of $T(\cdot, 1, \cdot)$. Then, we will combine \Cref{lemma:hardtoeasy}  with this new result, to obtain the theorem.

By applying \Cref{lemma:colorspacereduction} multiple times, we get the following.
\begin{lemma}\label{lemma:multiplereduction}
		Let $p\in[2, C]$ be an integer parameter. Let $k = \log_p C$. If $\Deltaline\geq 1$, $C\geq 2$,
		$S\geq (24\cdot H_{2p}\cdot \log p)^k$, and if an initial edge coloring with $X$ colors is given, we can express
		$T(\Deltaline, S, C)$ as
		\[
		T(\Deltaline, S, C) \leq (k \log p)\cdot(1+T(2p-1, 1, 2p)) + O(\log^* X).
		\]
\end{lemma}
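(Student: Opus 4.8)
The plan is to apply \Cref{lemma:colorspacereduction} repeatedly, $k$ times, each time shrinking the color palette by a factor of $p$ and the slack by a factor of $24\cdot H_{2p}\cdot\log p$. Concretely, write $\gamma := 24\cdot H_{2p}\cdot\log p$ for brevity. Starting from $T(\Deltaline, S, C)$ with $S\geq \gamma^k$, one application of \Cref{lemma:colorspacereduction} gives
\[
T(\Deltaline, S, C) \leq (\log p)\cdot(1+T(2p-1,1,2p)) + T\left(\Deltaline, \frac{S}{\gamma}, \frac{C}{p}\right).
\]
I would then unfold the recursion on the second term: after $i$ applications the remaining term is $T(\Deltaline, S/\gamma^i, C/p^i)$, and at each of the $k$ steps we accumulate one additive contribution of $(\log p)\cdot(1+T(2p-1,1,2p))$. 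Since $k = \log_p C$, after $k$ steps the palette has size $C/p^k = 1$, at which point the edge coloring problem is trivially solved (a palette of size $1$ forces every edge degree to be $0$, matching the observation earlier in the paper that $T(\Deltaline, S, C) = T(\min\{\Deltaline, \lceil C/S\rceil - 1\}, S, C)$ and that a degree-$0$ instance needs no rounds). Summing the $k$ accumulated terms gives the claimed bound $(k\log p)\cdot(1+T(2p-1,1,2p)) + O(\log^* X)$; the $O(\log^* X)$ slack absorbs any low-order rounding (e.g.\ the base case when the palette size is small but the recursion parameters do not divide evenly, where one falls back on a standard list coloring algorithm).

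The main things to check carefully are the bookkeeping on the parameters across the $k$ steps. First, the slack: \Cref{lemma:colorspacereduction} requires $S \geq \gamma$ to be applied, and after $i$ steps the slack is $S/\gamma^i$; since we assumed $S \geq \gamma^k$, we have $S/\gamma^i \geq \gamma^{k-i} \geq \gamma \geq 1$ for all $i < k$, so the hypothesis of the lemma is met at every step (and at step $k$ the palette has already collapsed to size $1$, so no further application is needed). Second, the palette size: we need $p \in [2, C/p^i]$ at step $i$, i.e.\ $p^{i+1} \leq C$, which holds for $i \leq k-1$ by definition of $k = \log_p C$. Third, one must be slightly careful that $k$ need not be an integer; I would take $k = \lceil \log_p C\rceil$ (or note the statement implicitly rounds), which only changes constants and is harmless since after the palette drops below $p$ a single final trivial step finishes the job.

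I do not expect any serious obstacle here — this is essentially a clean telescoping of the recursion in \Cref{lemma:colorspacereduction}, with the only subtlety being to confirm that the slack hypothesis $S\geq\gamma^k$ was chosen precisely so that it remains valid through all $k$ unfoldings. The potential pitfall is an off-by-one in the number of steps or in the exponent of $\gamma$; I would resolve this by stating the recursion invariant ``after $i$ steps the cost is at most $i\cdot(\log p)(1+T(2p-1,1,2p)) + T(\Deltaline, S/\gamma^i, C/p^i) + O(\log^* X)$'' and proving it by induction on $i$, then instantiating $i = k$.
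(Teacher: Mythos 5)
Your proposal is correct and follows essentially the same approach as the paper: both unroll \Cref{lemma:colorspacereduction} $k$ times, accumulating an additive $(\log p)\cdot(1+T(2p-1,1,2p))$ per step while the palette shrinks to $C/p^k = O(1)$, at which point the residual instance costs $O(\log^* X)$. Your additional bookkeeping (verifying the slack hypothesis $S/\gamma^i \geq \gamma$ for $i<k$, the palette-range condition $p \leq C/p^i$, and the integrality of $k$) spells out details the paper leaves implicit, but does not change the argument.
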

\begin{proof}
	Given the assumption on $S$, it is possible to recursively apply \Cref{lemma:colorspacereduction} for $k$ times, without violating the requirements for $S$. At step $i$, we recurse on $T\left(\Deltaline, \frac{S}{(24\cdot H_{2p}\cdot \log p)^i}, \frac{C}{p^i}\right)$. Hence, when $i=k = \log_p C$, the palette size becomes constant, giving an instance that can be solved in $O(\log^*X)$ rounds. 
\end{proof}

We are now ready to prove \Cref{thm:listedgecoloring}. Let us assume we are given an edge list coloring instance that uses a color palette of size $\Deltaline^c$, for some constant $c \ge 1$. We start by computing an $O(\Deltaline^2)$-edge coloring in $O(\log^*n)$ rounds~\cite{linial87}. Then, we apply \Cref{lemma:hardtoeasy} with parameter $\beta = \alpha \log^{4c} \Deltaline$, for some large enough constant $\alpha$, and obtain that
\[
T(\Deltaline, 1, \Deltaline^c) \leq O(\log^{8c+1} \Deltaline)\cdot T(\Deltaline, \alpha \log^{4c} \Deltaline,  \Deltaline^c).
\]
We now set $p = \sqrt{\Deltaline}$. We get that $k = \log_p \Delta^c = 2c$, and that $(24 \cdot H_{2p} \cdot \log p)^k = O(\log^{4c} \Deltaline)$. Hence, we can apply \Cref{lemma:multiplereduction} and get the following:
\[
T(\Deltaline, 1, \Deltaline^c) \leq O(\log^{8c+2} \Deltaline)\cdot  T(2\sqrt{\Deltaline}-1,1,2\sqrt{\Deltaline})  + O(\log^{8c+2} \Deltaline).
\]
In other words, we get a polynomial reduction on the maximum degree, and by iterating the same reasoning $O(\log \log \Deltaline)$ times, the claim follows.

\section{Discussion} 
In their book on distributed graph coloring \cite{barenboimelkin_book}, Barenboim and Elkin asked if it was possible to solve $(2\Delta - 1)$-edge coloring in time polylogarithmic in the number of nodes. Since 2017, we know that this is possible \cite{FischerGK17}, and in this paper, we make a step forward towards better understanding the time complexity of this problem. We show that $(2\Delta - 1)$-edge coloring can be solved in time quasi-polylogarithmic in $\Delta$. This improves the best known upper bound for a large range of values of $\Delta$. Our result directly suggests the question whether it is possible to solve this problem in $O(\polylog\Delta + \log^* n)$ or even in $O(\log\Delta + \log^* n)$ deterministic rounds.


\bibliographystyle{alphaabbr}
\bibliography{references}

\end{document}